\DeclareMathOperator*{\argmax}{arg\,max}
\def\R{\mathbb{R}}
\newtheorem{thm}{Theorem}[section]
\newtheorem*{mainthm}{Main Result}
\newtheorem{cor}[thm]{Corollary}
\newtheorem{lm}[thm]{Lemma}
\newtheorem{ft}[thm]{Fact}
\theoremstyle{remark}
\newtheorem{rk}[thm]{Remark}
\newtheorem{st}[thm]{Statement}
\newtheorem{as}[thm]{Assumption}
\theoremstyle{definition}
\newtheorem{df}[thm]{Definition}
\newtheorem{eg}[thm]{Example}
\title{Existence and structure of Nash equilibria for supermodular games}
\author{Lu YU\,\orcidlink{0000-0001-6154-4229}}
\address{Université Paris 1 Panthéon-Sorbonne, UMR 8074, Centre d'Economie de la Sorbonne, Paris, France}\email{yulumaths@gmail.com}
\date{\today}
\keywords{Supermodular game, Lattice, Nash equilibrium, Tarski fixed point theorem}
\begin{document} \begin{abstract}
		Two theorems announced by Topkis about the topological description of sublattices are proved. They are applied to extend some classical results concerning the existence and the order structure of Nash equilibria of certain supermodular games, with some problems in Zhou's proof corrected.
	\end{abstract}
	
	\maketitle
\section{Introduction}\label{sec:Intro}
The purpose of this note is to generalize the classical 	results concerning the existence and the structure of Nash equilibria  in supermodular games. Let $G=(N,\{S_i\}_{i\in N}, S,\{f_i\}_{i\in N})$ be a supermodular game (see Definition \ref{dfsupgame}), where $N$ is the set of players, $S_i$ (resp. $f_i$) is the strategy set (resp.  payoff function) of player $i$, and $S \subset \prod_{i \in N} S_i$ is the set of feasible strategy profiles.  Importantly, we do not require necessarily $S=\prod_{i \in N} S_i$. Such games are sometimes called generalized games in the literature (see, e.g., \cite[p.469]{braouezec2023economic}), and are very useful in economics (e.g., in general equilibrium). In particular, this class encompasses situations where the strategy choices of some player are constrained by the choices of other players. 	

A famous  result of Topkis \cite[Theorem 3.1]{topkis1979equilibrium} proved that the set of Nash equilibria of a supermodular game is nonempty and contains a maximum and a minimum, with many applications in game theory.  Later,  Zhou \cite[Theorem 2]{zhou1994set}  proved that the set of Nash equilibria of such games is in fact a complete lattice, generalizing another result of Vives  \cite[Theorem 4.1]{vives1990nash}. In Section \ref{sec:Intro}, the player set $N$ is assumed to be finite, as it is assumed in Facts \ref{ft:Zhougame} and \ref{ft:Toplisgame}, but we will relax this assumption in Theorems \ref{thm:inf} and \ref{thm:inf2}. 
\begin{ft}[Zhou]\label{ft:Zhougame}
	Assume that 	the game is in product form, \textit{i.e.}, $S=\prod_iS_i$. Assume that for each $i\in N$, $S_i$ is compact for a topology $\tau_i$ finer than the interval topology, and the payoff function $f_i(\cdot,x_{-i}):S_i\to \R$ is upper semicontinuous relative to $\tau_i$ for each $x_{-i}\in \prod_{j\neq i}S_j$. Then the set of Nash equilibria is a nonempty complete lattice.
\end{ft}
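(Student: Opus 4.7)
The plan is to realise the set of Nash equilibria as the fixed-point set of the joint best-response correspondence on the product lattice $S=\prod_i S_i$, and then to invoke a Tarski/Zhou-type fixed-point theorem. To begin with, since $\tau_i$ is finer than the interval topology on $S_i$, $\tau_i$-compactness forces compactness of $S_i$ in the interval topology, and by Frink's classical characterisation this makes $S_i$ a complete lattice. Hence $S$ is itself a complete lattice in the coordinatewise order.

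For each $i\in N$ and each $x_{-i}\in\prod_{j\neq i}S_j$ I would introduce the best-response set
\[
B_i(x_{-i})=\argmax_{x_i\in S_i}f_i(x_i,x_{-i}).
\]
Upper semicontinuity of $f_i(\cdot,x_{-i})$ on the $\tau_i$-compact space $S_i$ forces $B_i(x_{-i})$ to be nonempty and $\tau_i$-closed, hence $\tau_i$-compact; supermodularity of $f_i$ in $x_i$ forces it to be a sublattice, by the usual argument that $M+M\le f_i(x_i\vee y_i,x_{-i})+f_i(x_i\wedge y_i,x_{-i})\le 2M$ for any two maximisers $x_i,y_i$. Because $\tau_i$ refines the interval topology, $B_i(x_{-i})$ is also compact in the interval topology, and at this point I would invoke the topological description of sublattices attributed to Topkis and proved in the present paper to conclude that $B_i(x_{-i})$ is in fact a \emph{subcomplete} sublattice of $S_i$. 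The increasing-differences hypothesis, through Topkis's monotonicity theorem, then yields that $B_i(\cdot)$ is non-decreasing in Veinott's strong set order.

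Assembling these ingredients, the joint best response $B(x)=\prod_{i\in N}B_i(x_{-i})$ is a correspondence on the complete lattice $S$ whose values are nonempty subcomplete sublattices and which is monotone in the strong set order; its fixed-point set is exactly the set of Nash equilibria of $G$. Zhou's fixed-point theorem therefore delivers that $\Fix(B)$ is a nonempty complete lattice, completing the proof. The main obstacle is the step passing from ``$B_i(x_{-i})$ is a $\tau_i$-closed sublattice'' to ``$B_i(x_{-i})$ is subcomplete'': this upgrade is not formal, is precisely what the Topkis-style topological sublattice theorems of this paper are designed to supply, and is also the step where Zhou's original argument was incomplete and needs to be corrected.
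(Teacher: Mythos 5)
Your proposal is correct and follows essentially the same route as the paper: it is precisely the argument of Theorem \ref{thm:inf} specialised to the product-form setting, using the correspondence $R(x)=\prod_i Y_i(x_{-i})$, with the key corrective step --- upgrading ``$\tau_i$-compact sublattice'' to ``subcomplete sublattice'' via the Topkis characterisation (Theorem \ref{thm:Topkis}) rather than via Zhou's false Statement \ref{st:Zhou} --- identified exactly as the paper does. The only cosmetic difference is that the paper's Theorem \ref{thm:main} reaches the same conclusion through the aggregated best response $Y_N$ built from $g_N=\sum_i f_i(y_i,x_{-i})$, but both correspondences appear in the paper and both work here.
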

Based on Zhou's work, Topkis \cite[Theorem 4.2.1]{topkis1998supermodularity} enhanced his previous result as follows.
\begin{ft}[Topkis]\label{ft:Toplisgame}
	Assume that  each  $S_i$ is a sublattice of some Euclidean space $\R^{m_i}$ endowed with the subspace topology, and the set
	$S$ of feasible joint strategies is nonempty and compact in the product topology on $\prod_{i\in N}S_i$. Assume that for each $x_{-i}\in S_{-i}$ and each $i\in N$, the payoff function $f_i(\cdot,x_{-i}):\{x_i\in S_i:(x_i,x_{-i})\in S\}\to \R$ is upper semicontinuous relative to the subspace topology. Then the set of Nash equilibria is a nonempty complete lattice. In particular, a greatest and a least equilibrium point exist.
\end{ft}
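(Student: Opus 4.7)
My plan is to reduce the theorem, via the announced topological theorems on sublattices, to a Tarski-type iteration on the complete lattice $S$, modified to account for the non-product structure of $S$ so that the iterates stay inside $S$. Zhou's method from the proof of Fact~\ref{ft:Zhougame} can then be adapted to upgrade extremal existence to the full complete-lattice statement, by restricting the supermodular game to appropriate order-convex compact sublattices.

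I would begin by setting up the order-theoretic data. The announced theorems imply that the nonempty compact sublattice $S\subseteq\prod_i\R^{m_i}$ is subcomplete, hence a complete lattice, and that for each $x_{-i}\in S_{-i}:=\pi_{-i}(S)$ the section $S_i(x_{-i}):=\{x_i\in S_i:(x_i,x_{-i})\in S\}$ is a nonempty compact sublattice of $S_i$, hence likewise a complete lattice; sublatticehood of $S_i(x_{-i})$ is immediate since the $-i$ coordinate is preserved by the product lattice operations on $S$. One checks directly that $x_{-i}\mapsto S_i(x_{-i})$ is ascending in the strong set order: for $x_{-i}\le x'_{-i}$, $x_i\in S_i(x_{-i})$, and $x'_i\in S_i(x'_{-i})$, the profiles $(x_i\wedge x'_i,x_{-i})$ and $(x_i\vee x'_i,x'_{-i})$ both lie in $S$.

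Next I would analyze best responses. Upper semicontinuity of $f_i(\cdot,x_{-i})$ on the compact set $S_i(x_{-i})$ makes $B_i(x_{-i}):=\argmax_{z\in S_i(x_{-i})}f_i(z,x_{-i})$ nonempty and closed, and supermodularity of $f_i(\cdot,x_{-i})$ makes it a sublattice, hence again a complete lattice. Topkis's monotone comparative statics theorem, applied to the ascending constraint correspondence and the increasing differences of $f_i$, yields that $B_i$ is itself ascending in the strong set order, so the extremal selections $\overline h_i(x_{-i}):=\max B_i(x_{-i})$ and $\underline h_i(x_{-i}):=\min B_i(x_{-i})$ are monotone in $x_{-i}$.

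To extract a greatest Nash equilibrium, set $\bar x^0:=\max S$ and iterate $\bar x^{n+1}_i:=\overline h_i(\bar x^n_{-i})$. The crucial point is that $\bar x^{n+1}\in S$: inductively $\bar x^{n+1}\le\bar x^n$, each $(\bar x^{n+1}_i,\bar x^n_{-i})$ lies in $S$, and the coordinatewise meet of these profiles over $i\in N$ equals $\bar x^{n+1}$, which therefore belongs to $S$ by subcompleteness. Transfinite iteration (required when $N$ is infinite) produces a decreasing net in $S$ with infimum in $S$, and this infimum is a fixed point of the best-reply map, hence the greatest Nash equilibrium; a symmetric iteration from $\min S$ produces the least. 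The full complete-lattice structure of the equilibrium set is then obtained by applying the extremal existence result to the restricted supermodular subgames on the order-convex compact sublattices $S\cap[\sup A,\max S]$ and $S\cap[\min S,\inf A]$ for each nonempty subset $A$ of the equilibrium set, producing equilibria that serve as the supremum and infimum of $A$ in the equilibrium set. The principal obstacle is that the naive best-reply map $x\mapsto(\overline h_i(x_{-i}))_i$ does not a priori take $S$ into $S$; this is precisely where the sublattice hypothesis on $S$ is used essentially, and where the announced topological theorems on subcompleteness are needed to confine the iteration to $S$ and to supply the required completeness of the intermediate sections and best-reply sets.
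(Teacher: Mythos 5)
Your route is genuinely different from the paper's. The paper does not iterate: it proves Theorem \ref{thm:main} by feeding the \emph{aggregated} best-response correspondence $Y_N(x)=\argmax_{y\in S(x)}\sum_i f_i(y_i,x_{-i})$ (a Nikaido--Isoda-type device, which handles the non-product $S$ in one stroke) into Zhou's correspondence fixed-point theorem (Fact \ref{ft:TarskiZhou}), using Theorems \ref{thm:Topkis} and \ref{thm:Topkis2} only to certify that $S$ and the values $Y_N(x)$ are subcomplete; Fact \ref{ft:Toplisgame} is then the Euclidean special case. Your extremal part is essentially sound: the meet trick showing $\bar x^{n+1}=\bigwedge_i(\bar x^{n+1}_i,\bar x^n_{-i})\in S$ is a correct way to keep the iterates feasible, and transfinite iteration (needed because the strategy sets are infinite, not because $N$ is) stabilizes at the greatest equilibrium by a cardinality argument, with no continuity in opponents' strategies required.

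The gap is in the last step, where you pass from extremal existence to the complete-lattice structure by restricting the \emph{game} to $S\cap[\sup_S A,\max S]$. A least equilibrium $e^*$ of the restricted game is not automatically an equilibrium of the original game: in the restricted game player $i$ may only deviate within $S_i(e^*_{-i})\cap[a_i,\cdot]$ where $a=\sup_S A$, so downward deviations must be ruled out separately. The standard bridge is the chain $f_i(y_i\vee a_i,e^*_{-i})-f_i(y_i,e^*_{-i})\ge f_i(a_i,e^*_{-i})-f_i(y_i\wedge a_i,e^*_{-i})\ge f_i(a_i,a_{-i})-f_i(y_i\wedge a_i,a_{-i})$, which reduces the problem to showing that no player gains by deviating \emph{downward} from $a$ given $a_{-i}$. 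That inequality is immediate when $A$ is a singleton and follows by finite induction when $A$ is finite, but for infinite $A$ (which the complete-lattice claim requires) it does not follow from own-strategy upper semicontinuity alone; some limiting argument is needed and none is available from the hypotheses. This is precisely why Zhou, and the paper, perform the restriction at the level of the \emph{correspondence} -- inside the proof of Fact \ref{ft:TarskiZhou} one restricts to $Y(x)\cap[\sup A,\max S]$, whose fixed points are trivially fixed points of $Y$ -- rather than at the level of the game. Either supply the missing argument for arbitrary $A$ (e.g.\ by quoting Zhou's Theorem 1 as a black box, as the paper does), or your proof only establishes that $E$ is a nonempty lattice with greatest and least elements, not that it is complete.
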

Hypotheses in Facts \ref{ft:Zhougame} and \ref{ft:Toplisgame} are at the same time topological and order theoretical. The existence of equilibria for supermodular games (satisfying certain purely order-theoretical assumptions) is given in Veinott's unpublished work \cite[Ch.~10, Thm.~2]{veinott1992lattice}. As far as we are aware, Veinott's result does not contain Fact \ref{ft:Zhougame} or Fact \ref{ft:Toplisgame}.

Fact \ref{ft:Toplisgame}  restricts its conclusion to  games with strategies in Euclidean spaces because the proof  uses  \cite[Theorem 2.3.1]{topkis1998supermodularity}. By contrast, Fact \ref{ft:Zhougame} removes this restriction but is valid only for games in product form.  Moreover, as pointed out in Section \ref{secmain}, the proof in \cite{zhou1994set} is in fact insufficient for the stated generality. Here we generalize Facts \ref{ft:Zhougame} and \ref{ft:Toplisgame} by allowing more flexible strategy sets.

\begin{mainthm}[Theorem \ref{thm:main}] For every $i\in N$, let $\tau_i$ be a  topology on $S_i$ finer than the interval topology. Assume that 	 the sublattice $S \subset \prod_i S_i$ is closed and compact in the product topology of the $\tau_i$. If for	every $i\in N$, every $x\in S$, the function $f_i(\cdot,x_{-i}): S_i(x_{-i})\to \R$ is upper semicontinuous, then the set  of Nash equilibria is a nonempty  complete lattice.  \end{mainthm}

The paper is organized as follows.
In Section \ref{sec:partialorder}, we give two theorems in lattice theory, useful for our main result. They were stated by Topkis without published proofs, so for the sake of completeness, we provide proofs.
In Section \ref{sec:game}, we recall standard definitions related to supermodular games.  In Section \ref{secmain}, we prove our main theorem, as a consequence of Theorem \ref{thm:Topkis2}. For games with infinitely many players, two results (Theorems \ref{thm:inf} and  \ref{thm:inf2}) are also proved.
\section{Partial order and lattice}\label{sec:partialorder}
In this section, we prove two results stated by Topkis \cite[p.31]{topkis1998supermodularity}. They  make our main result possible.

Throughout this note, a set with a partial order is called a poset. The products of posets are always endowed with the product relation. We recall the following standard definitions: 

\begin{df}[Lattice]\cite[p.13]{topkis1998supermodularity}
	A poset $L$ is called a lattice if for every $x,y\in L$, both $\sup_L\{x,y\}$ and $\inf_L\{x,y\}$ exist in $L$, classically denoted by $x\vee_Ly$ and $x\wedge_L y$ respectively. 
\end{df}
\begin{df}[Complete lattice]\cite[p.29]{topkis1998supermodularity}
	A poset $L$ is called a complete lattice if  every nonempty subset $A$ of $L$, both $\sup_L(A)$ and $\inf_L(A)$ exist. (Such an $L$ is necessarily a lattice.)
\end{df}
\begin{df}[Increasing difference]\label{dfincdiff}\cite[p.42]{topkis1998supermodularity}
	Suppose that $X$ and $T$ are two posets. Let $S$ be a subset of $X\times T$ and $f :S\to \R$ be a function.
	If for every $x\lneq x'\in X$, every $t\lneq t'\in T$ with $\{x,x'\}\times \{t,t'\}\subset S$ we have \[f(x',t)+f(x,t')\le f(x,t)+f(x',t'),\]
	
	then  we say $f$ has increasing differences relative to $X\times T$.
\end{df}

\begin{df}[Interval topology]\cite[p.29]{topkis1998supermodularity}
	Let $X$ be a poset. The topology on $X$ generated by the subbasis for closed subsets  comprised of closed intervals $(-\infty,x]$ and $[x,+\infty)$ (where $x$ runs over $X$)  is called the interval topology of $X$.
\end{df}

\begin{df}[Increasing correspondence]\cite[p.159]{milgrom1994monotone}
	Let $ T$ (resp. $L$) be a  poset (resp. lattice). A correspondence $\phi:T\twoheadrightarrow L$ is called increasing if for any $t\le t'$ in $T$, every $x\in \phi(t)$ and every $x'\in \phi(t')$, one has \[x\wedge x'\in\phi(t),\quad x\vee x'\in \phi(t').\]
\end{df}

\begin{df}\cite[p.13, p.29]{topkis1998supermodularity}
	Let	$P$ be a  poset, $S\subset P$ be a subset. If for each nonempty finite (resp. nonempty but possibly infinite) subset $A\subset S$, both $\sup_P(A)$ and $\inf_P(A)$ exist and belong to $S$, then $S$  is called a sublattice (resp. subcomplete sublattice)  of $P$.
\end{df}	
Subcomplete sublattices are called \emph{closed in the lattice theoretical sense} in \cite[p.296]{zhou1994set}.

Fact \ref{ft:Zhougame} has been proved by Zhou using lattice theoretic Statement \ref{st:Zhou}, which is  wrong. 
\begin{st}\label{st:Zhou}
	\hfill 
	\renewcommand\labelenumi{(\theenumi)}
	\begin{enumerate}
		\item \textup{(\cite[Remark 2]{zhou1994set})} Any sublattice of a complete lattice is closed in the lattice theoretical sense if and only if it is topologically closed in the interval topology.
		\item \textup{(the second paragraph of the proof  of \cite[Thm.~2]{zhou1994set})} In the interval topology of a complete lattice, a compact subset is closed.
\end{enumerate}\end{st}
A counterexample to both  points in Statement \ref{st:Zhou} is  Example \ref{egmistake}. An \emph{anti-chain} is a poset where any two distinct elements are incomparable.

\begin{eg}\label{egmistake}
	Let $L$ be the set obtained by adding two elements $m\neq M$ to  an infinite anti-chain $X$. We extend the partial order from $X$ to $L$ as follows. For every $x\in X$, we require that $m\le x\le M$. Then $L$ is a complete lattice. In fact, for every nonempty subset $S$ of $L$, one has $\inf_L(S)=m$ and $\sup_L(S)=M$. The interval topology of $L$ is the cofinite topology. Fix $x_0\in X$. Then $L\setminus\{x_0\}$ is a subcomplete sublattice of $L$. It is compact  but not closed in the interval topology of $L$. 
\end{eg}
The proof of Fact \ref{ft:Zhougame} also relies on  Zhou's generalization (Fact \ref{ft:TarskiZhou}) of Tarski's fixed point theorem \cite[Theorem 1]{tarski1955lattice}. For a self correspondence $Y:S\twoheadrightarrow S$ on a set $S$, let $\mathrm{Fix}(S)=\{x\in S:x\in Y(x)\}$. 
\begin{ft}[Tarski-Zhou, {\cite[Theorem 1]{zhou1994set}}]\label{ft:TarskiZhou}
	Let	$S$ be a complete lattice. Let $Y:S\twoheadrightarrow S$ be an increasing correspondence, such that for every $x\in S$, $Y(x)$ is a nonempty \emph{subcomplete} sublattice of $S$. Then 
	$\mathrm{Fix}(Y)$ is  a nonempty complete lattice.
\end{ft}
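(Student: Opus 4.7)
The plan is to adapt Tarski's original proof to the set-valued setting. First I would establish the existence of a greatest fixed point. Consider the set of ``sub-fixed points''
$E := \{x \in S : x \le y \text{ for some } y \in Y(x)\}$.
Because $S$ is a complete lattice with minimum $\bot$ and $Y(\bot)$ is nonempty, $\bot \in E$, so $\bar x := \sup_S E$ is well-defined.

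The crux is to show $\bar x \in Y(\bar x)$. Fix $\bar y \in Y(\bar x)$ and, for each $x \in E$, an element $y_x \in Y(x)$ with $x \le y_x$. Applying the increasing property to $x \le \bar x$, $y_x \in Y(x)$, $\bar y \in Y(\bar x)$ yields $y_x \vee \bar y \in Y(\bar x)$. Subcompleteness of $Y(\bar x)$ then provides $z := \sup_S \{y_x \vee \bar y : x \in E\} \in Y(\bar x)$, and $z \ge y_x \ge x$ for every $x \in E$ gives $z \ge \bar x$. Now pick $w \in Y(z)$; the increasing property applied to $\bar x \le z$, $z \in Y(\bar x)$, $w \in Y(z)$ gives $z \vee w \in Y(z)$, so $z \in E$ and therefore $z \le \bar x$. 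Thus $z = \bar x \in Y(\bar x)$. Since every fixed point trivially belongs to $E$, $\bar x$ is the maximum of $\Fix(Y)$. The hypotheses on $Y$ are self-dual, so reversing the order on $S$ produces a minimum fixed point.

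For the completeness, let $A \subset \Fix(Y)$ be nonempty and set $\alpha := \sup_S A$. On the complete lattice $[\alpha, \top] \subset S$ define $\tilde Y(x) := Y(x) \cap [\alpha, \top]$. To see $\tilde Y(x) \ne \emptyset$ for $x \ge \alpha$, pick $y \in Y(x)$: for each $a \in A$, the increasing property applied to $a \le x$, $a \in Y(a)$, $y \in Y(x)$ gives $a \vee y \in Y(x)$, and subcompleteness of $Y(x)$ yields $\sup_S\{a \vee y : a \in A\} = \alpha \vee y \in Y(x) \cap [\alpha, \top]$. A routine check shows that $\tilde Y$ is increasing with subcomplete values in $[\alpha, \top]$ (using that infima of subsets of $[\alpha, \top]$ coincide with those taken in $S$). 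Applying the existence part to $\tilde Y$ supplies a minimum fixed point $x^* \in [\alpha, \top]$, which by construction is the supremum of $A$ inside $\Fix(Y)$. The dual construction on $[\bot, \inf_S A]$ yields the infimum.

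The main obstacle is the key step that $\bar x$ is itself a fixed point: the naive hope that $\bar y$ dominates every $y_x$ has no reason to hold, so subcompleteness of $Y(\bar x)$ must be exploited to \emph{manufacture} a dominator $z$, and the increasing property invoked a second time (now with the pair $\bar x \le z$) to collapse $z \ge \bar x$ into $z = \bar x$. Once this existence step is in place, completeness of $\Fix(Y)$ reduces to rerunning the same argument on suitable sub-intervals.
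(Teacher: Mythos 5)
Your proof is correct. Note that the paper does not prove this statement at all: it is recorded as a Fact and cited to Zhou's Theorem~1, so there is no in-paper argument to compare against. What you have written is essentially Zhou's own proof, i.e.\ the natural set-valued adaptation of Tarski's argument: the sup of the sub-fixed-point set $E$ is forced into $Y(\bar x)$ by manufacturing the dominator $z$ via subcompleteness, and completeness of $\Fix(Y)$ follows by restricting to the interval $[\sup_S A,\top]$. Your verification that $\tilde Y(x)=Y(x)\cap[\alpha,\top]$ is nonempty (via $\sup_S\{a\vee y:a\in A\}=\alpha\vee y\in Y(x)$) is exactly the point where both the increasing property and subcompleteness are genuinely needed, and you handle it correctly.
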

We shall prove two theorems announced without proof by Topkis. This requires Lemma \ref{excompare}.
\begin{lm}\label{excompare}
	\hfill
	\renewcommand\labelenumi{(\theenumi)}
	\begin{enumerate}
		\item\label{it:subint} Let $P$ be a poset and 
		$Q$ be a subcomplete sublattice of $P$. Let $\tau_P$ (resp. $\tau_Q$) be the interval topology of $P$ (resp. $Q$). Then  $\tau_Q=\tau_P|_Q$.
		\item\label{it:prodint} Let $\{P_i\}_{i\in I}$ be a family of posets, and let $\tau_i$ be the interval topology of $P_i$. Denote the product topology  of the topologies $\tau_i$ on $P:=\prod_iP_i$ by $\tau$. For each $i\in I$, let $L_i\subset P_i$ be a subcomplete sublattice. Then the restriction of $\tau$ to $\prod_iL_i$  is the interval topology of $\prod_iL_i$.
	\end{enumerate}
\end{lm}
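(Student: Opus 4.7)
I would handle item (1) first, then deduce (2) by combining it with a direct computation on the product.

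For (1), I would verify the two inclusions $\tau_Q\subseteq\tau_P|_Q$ and $\tau_P|_Q\subseteq\tau_Q$. The first follows immediately from the identities $(-\infty,q]_Q=(-\infty,q]_P\cap Q$ and $[q,+\infty)_Q=[q,+\infty)_P\cap Q$ valid for $q\in Q$, which exhibit every subbasic closed set of $\tau_Q$ as the trace of a subbasic closed set of $\tau_P$. For the second, fix $x\in P$ and set $A:=(-\infty,x]_P\cap Q$; if $A=\emptyset$ it is trivially closed in $\tau_Q$, and otherwise subcompleteness yields $q^{*}:=\sup_P A\in Q$, whence $q^{*}\le x$ because $x$ is an upper bound of $A$ in $P$. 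A straightforward double inclusion then gives $A=(-\infty,q^{*}]_Q$, a $\tau_Q$-subbasic closed set. The argument for traces of upper intervals is symmetric.

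For (2), I may assume each $L_i$ nonempty, in which case applying subcompleteness to the subset $L_i$ of $P_i$ shows that $\top_i:=\sup_{P_i}L_i$ and $\bot_i:=\inf_{P_i}L_i$ lie in $L_i$ and are respectively the greatest and least elements of $L_i$. In particular, each $L_i$ is itself a complete lattice; write $\tau'_i$ for its interval topology. By (1), $\tau'_i=\tau_i|_{L_i}$, and since restriction commutes with products of topologies, $\tau|_L=\prod_i(\tau_i|_{L_i})=\prod_i\tau'_i$. It therefore suffices to prove that on $L=\prod_iL_i$ the product topology $\prod_i\tau'_i$ coincides with the interval topology $\tau^{\mathrm{int}}_L$ of $L$.

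One inclusion is easy: every subbasic closed set $(-\infty,\ell]_L$ of $\tau^{\mathrm{int}}_L$ equals $\bigcap_i\pi_i^{-1}((-\infty,\ell_i]_{L_i})$, an intersection of $\prod_i\tau'_i$-closed cylinders, and similarly for $[\ell,+\infty)_L$. The reverse inclusion is the main obstacle: I must recognise each subbasic cylinder of $\prod_i\tau'_i$ as an interval in $L$. For $i_0\in I$ and $x_{i_0}\in L_{i_0}$, define $\ell^{*}\in L$ by $\ell^{*}_{i_0}=x_{i_0}$ and $\ell^{*}_j=\top_j$ for $j\neq i_0$; then $\pi_{i_0}^{-1}((-\infty,x_{i_0}]_{L_{i_0}})=(-\infty,\ell^{*}]_L$, a subbasic $\tau^{\mathrm{int}}_L$-closed set. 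The upper cylinder is handled analogously with $\bot_j$ in place of $\top_j$. This rewriting is precisely what the existence of $\top_j$ and $\bot_j$ in each $L_j$ makes possible, and it completes the proof.
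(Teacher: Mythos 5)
Your proof is correct. For part (1) your argument is essentially the paper's: the easy inclusion via $[q,+\infty)_Q=[q,+\infty)_P\cap Q$, and the reverse inclusion by using subcompleteness to replace the trace $(-\infty,x]_P\cap Q$ with $(-\infty,\sup_P((-\infty,x]_P\cap Q)]_Q$ (the paper does the dual computation on upper intervals and says ``similarly''). The real divergence is in part (2): after the common reduction $\tau|_L=\prod_i(\tau_i|_{L_i})=\prod_i\tau'_i$, the paper simply observes that each $L_i$ is a complete, hence bounded, lattice and invokes Al\`o--Frink's theorem that the interval topology of a product of bounded lattices is the product of their interval topologies. You instead prove that fact directly: one inclusion from $(-\infty,\ell]_L=\bigcap_i\pi_i^{-1}((-\infty,\ell_i]_{L_i})$, the other by padding a subbasic cylinder $\pi_{i_0}^{-1}((-\infty,x_{i_0}]_{L_{i_0}})$ into the interval $(-\infty,\ell^*]_L$ with $\ell^*_j=\top_j$ off the coordinate $i_0$ --- exactly where the boundedness of each $L_j$ (guaranteed by subcompleteness when $L_j\neq\emptyset$, the empty case being trivial) is used. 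This makes your write-up self-contained where the paper relies on an external citation, at the cost of a slightly longer argument; both are valid, and your cylinder computation is in substance the proof of the cited theorem in the bounded case.
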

\begin{proof}
	\hfill
	\renewcommand\labelenumi{(\theenumi)}
	\begin{enumerate}
		\item For every $a\in Q$, the interval $[a,+\infty)_Q=[a,+\infty)_P\cap Q$ is closed in $\tau_P|_Q$. Similarly, $(-\infty,a]_Q$ is closed in $\tau_P|_Q$. Therefore, the topology 
		$\tau_Q$ generated by these intervals is weaker than $\tau_P|_Q$.
		
		For every $y\in P$, we show that $[y,+\infty)_P\cap Q$ is closed in $\tau_Q$. 
		
		If this set is empty then it is trivially closed. If this set is nonempty, then as $Q$ is subcomplete in $P$, we know that $z:=\inf_P([y,+\infty)_P\cap Q)$ exists in $Q$. For any $x\in [y,+\infty)_P\cap Q$, we have $x\ge y$ and $x\ge z$.  So $z\ge y$. Therefore, $[y,+\infty)_P\cap Q=[z,+\infty)_Q$ is closed in $\tau_Q$.
		
		Similarly, we can show that $(-\infty,y]_P\cap Q$ is closed in $\tau_Q$. Thus, $\tau_Q\supset \tau_P|_Q$. 
		\item  As a property of product topology, the product  of   the topologies $\{\tau_i|_{L_i}\}_{i\in I}$ on $\prod_{i\in I}L_i$ is exactly the restriction $\tau|_{\prod_iL_i}$ of $\tau$. For every $i\in I$, by  Point (\ref{it:subint}), the interval topology $\sigma_i$ of $L_i$ equals $\tau_i|_{L_i}$. Moreover, the lattice $L_i$ is complete hence bounded. By \cite[Thm.~3]{alo1966topologies}, the interval topology of $\prod_iL_i$ is the product topology $\prod_i\sigma_i$.\end{enumerate}
\end{proof}

Theorems \ref{thm:Topkis} and \ref{thm:Topkis2} are stated in \cite[p.31]{topkis1998supermodularity}, whose proofs are referred  to  the unpublished work \cite{toplisunpublished}. They have applications to supermodular games. Theorem \ref{thm:Topkis} appears without proof in \cite[Footnote 3]{milgrom1994monotone}.   For the convenience of the reader, we include proofs.
\begin{thm}\label{thm:Topkis}
	Let	$L$ be a poset, and let $S\subset L$ be a  sublattice. Then $S$ is a subcomplete sublattice if and only if $S$ is compact in the interval topology of $L$.
\end{thm}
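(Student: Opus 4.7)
The plan is to prove the two implications separately. For the forward direction, suppose that $S$ is a subcomplete sublattice of $L$. Then $S$ is itself a complete lattice, with its suprema and infima inherited from $L$. By the first part of Lemma~\ref{excompare}, the subspace topology on $S$ induced from the interval topology of $L$ coincides with the interval topology of $S$, so it suffices to invoke Frink's classical theorem that every complete lattice is compact in its own interval topology.

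For the backward direction, suppose $S$ is compact in the interval topology of $L$ and fix a nonempty $A\subset S$; I will show that $\sup_L A$ exists and lies in $S$, the case of the infimum being dual. The strategy is three successive applications of the finite intersection property, each witness supplied by the sublattice hypothesis. First, the family $\{[a,+\infty)_L\cap S\}_{a\in A}$ consists of closed subsets of $S$ whose finite intersections are nonempty because $a_1\vee_L\cdots\vee_L a_n$ lies in $S$; by compactness the intersection $U$ of this family---the set of upper bounds of $A$ in $S$---is nonempty and, being closed in $S$, itself compact. Secondly, dually inside $U$, the family $\{(-\infty,u]_L\cap U\}_{u\in U}$ has the finite intersection property because $u_1\wedge_L\cdots\wedge_L u_n$ lies in $S$ and remains an upper bound of $A$, hence belongs to $U$; compactness of $U$ then produces some $m\in U$ satisfying $m\le u$ for every $u\in U$, i.e.\ the minimum upper bound of $A$ among elements of $S$.

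The critical third step is to upgrade this to $m=\sup_L A$, by checking that $m$ is dominated by every upper bound of $A$ in $L$, even those lying outside $S$. Given such a $y\in L$, I apply compactness once more to $\{[a,+\infty)_L\cap(-\infty,y]_L\cap S\}_{a\in A}$: each finite intersection contains $a_1\vee_L\cdots\vee_L a_n$, which lies in $S$, dominates each $a_i$, and is dominated by $y$ precisely because it is the $L$-supremum of a finite subset of $A$ that is bounded above by $y$. The resulting common point $s\in U\cap(-\infty,y]_L$ yields $m\le s\le y$, so $m=\sup_L A\in S$. The main obstacle is exactly this last step: it is tempting to declare victory upon producing a minimum upper bound of $A$ inside $S$, but one must still dominate every upper bound of $A$ lying outside $S$, which is precisely where the third compactness argument---and the sublattice hypothesis forcing finite joins back into $S$---becomes indispensable.
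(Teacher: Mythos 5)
Your proof is correct and follows essentially the same strategy as the paper: Frink--Birkhoff plus Lemma \ref{excompare}(\ref{it:subint}) for the forward direction, and the finite intersection property applied to families of closed intervals intersected with $S$ (with witnesses supplied by finite joins and meets staying in the sublattice) for the converse. The only organizational difference is that the paper runs a single compactness argument over the family $\{S\cap[a,b]_L\}_{a\in A,\,b\in B}$ with $B$ the set of all upper bounds of $A$ in $L$ (after first invoking Frink--Birkhoff again to produce $\max(S)\in B$), whereas you split it into three applications; your variant has the minor advantage of not needing Frink--Birkhoff in the converse direction.
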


\begin{proof}
	First,  assume that $S$ is a subcomplete sublattice. Then $S$ is a complete lattice. By the Frink-Birkhoff theorem  (\cite[Thm.~20, p.250]{birkhoff1940lattice}), $S$ is compact in its own interval topology. By Lemma \ref{excompare} (\ref{it:subint}), $S$ is compact in the interval topology of $L$.
	
	Conversely,  assume that $S$ is compact in the interval topology of $L$. Then $S$ is compact in its own interval topology. By the Frink-Birkhoff theorem, the lattice $S$ is  complete. In particular, $\max(S)$ exists.
	
	For every nonempty subset $A$ of $S$, we  show that $\sup_L(A)$ exists and is in $S$.  Let $B=\{x\in L:x\ge a,\forall a\in A\}$. Then $\max(S)\in B$. In particular, $B$ is nonempty. 
	
	Consider a nonempty family of subsets $\{S\cap [a,b]_L\}_{a\in A,b\in B}$ of $S$. For every $a\in A$, every $b\in B$, the interval $[a,b]_L$ is closed in $L$, so $S\cap[a,b]_L$ is closed in $S$ in the interval topology of $L$.  Consider any nonempty finite subfamily: $\{S\cap[a_i,b_i]_L\}_{i=1}^m$, where $m\ge 1$ is an integer,  $a_i\in A$, and $b_i\in B$. As $S$ is a sublattice of $L$,  \[c:=\sup_L\{a_i:1\le i\le m\}=\sup_S\{a_i:1\le i\le m\}\] exists in $S$. By definition, $c\ge a_i$ for all $1\le i\le m$. For every $1\le i\le m$, as $b_i\in B$ we have $b_i\ge a_j$ for all $1\le j\le m$, so $b_i\ge c$.
	Therefore, $c$ is in the intersection of this finite subfamily. 
	
	By the compactness of $S$, the intersection of the whole family  contains an element $z$. For every $a\in A$, every $b\in B$, one has $z\in S\cap [a,b]_L$, so $a\le z$. Therefore, $z\in B$ is the least element of $B$. Thus, $\sup_L(A)=\min(B)=z$ exists and is in $S$.
	
	Similarly,  $\inf_L(A)$ exists and is in $S$. Therefore, the sublattice $S$ is  subcomplete.
\end{proof}
\begin{rk}When $S=L$,  Theorem \ref{thm:Topkis}  reduces to the Frink-Birkhoff Theorem.\end{rk}
\begin{thm}\label{thm:Topkis2}
	Let $\{P_i\}_{i\in N}$ be a nonempty family of posets, and let $S$ be a nonempty sublattice of $P:=\prod_iP_i$. Then  $S$ is a subcomplete sublattice of $P$ if and only if $S$ is compact in the product topology  $\tau$ of the individual interval topology $\tau_i$ of each $P_i$.
\end{thm}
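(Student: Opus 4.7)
The strategy is to reduce both implications to Theorem \ref{thm:Topkis} by inserting the intermediate space $\prod_{i\in N}S_i$, where $S_i:=\pi_i(S)\subset P_i$ is the projection of $S$ onto the $i$-th factor. The bridge is Lemma \ref{excompare}(\ref{it:prodint}), which identifies the restriction of $\tau$ to $\prod_iS_i$ with the interval topology of $\prod_iS_i$ as soon as each $S_i$ is subcomplete in $P_i$. Two routine facts are used throughout: (i) each $S_i$ is a sublattice of $P_i$, since $\pi_i$ commutes with coordinatewise lattice operations; and (ii) for any $A\subset P$, $\sup_P A$ exists if and only if $\sup_{P_i}\pi_i(A)$ exists in every $P_i$, in which case $\sup_P A=(\sup_{P_i}\pi_i(A))_{i\in N}$, with the analogous statement for infima.

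For the direct implication, assume $S$ is subcomplete in $P$. Given any $A\subset S_i$, I would lift $A$ to $B\subset S$ by picking one preimage of each $a\in A$; subcompleteness gives $\sup_P B\in S$, and a coordinate-replacement argument (altering only the $i$-th coordinate of $\sup_P B$ by an arbitrary upper bound $y$ of $A$ in $P_i$) shows $\pi_i(\sup_P B)=\sup_{P_i}A$, so $\sup_{P_i}A\in S_i$. The same argument treats infima, hence each $S_i$ is subcomplete in $P_i$. Combining this with (ii), suprema and infima of subsets of $S$ computed in $\prod_iS_i$ coincide with those computed in $P$, so $S$ is also subcomplete in $\prod_iS_i$. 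Theorem \ref{thm:Topkis} then makes $S$ compact in the interval topology of $\prod_iS_i$, which equals $\tau|_{\prod_iS_i}$ by Lemma \ref{excompare}(\ref{it:prodint}); hence $S$ is $\tau$-compact.

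For the converse, assume $S$ is $\tau$-compact. Each $S_i=\pi_i(S)$ is the continuous image of a compact set, so $S_i$ is $\tau_i$-compact; being a sublattice by (i), Theorem \ref{thm:Topkis} gives that $S_i$ is subcomplete in $P_i$. Lemma \ref{excompare}(\ref{it:prodint}) now identifies $\tau|_{\prod_iS_i}$ with the interval topology of $\prod_iS_i$, and $S$ is a sublattice of $\prod_iS_i$ (the latter being a sublattice of $P$ by (i)) compact in this interval topology, so Theorem \ref{thm:Topkis} yields that $S$ is subcomplete in $\prod_iS_i$; observation (ii) combined with the subcompleteness of each $S_i$ in $P_i$ finally promotes subcompleteness from $\prod_iS_i$ to $P$. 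The main obstacle throughout is the reconciliation of suprema and infima taken in the three posets $S$, $\prod_iS_i$, and $P$; this is resolved in both directions precisely by first establishing that each projection $S_i$ is itself subcomplete in $P_i$, which forces coordinatewise suprema in $\prod_iS_i$ to coincide with those in $P$.
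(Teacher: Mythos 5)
Your proof is correct, and the forward direction is essentially the paper's argument: project onto each factor, observe that each projection $S_i$ is subcomplete in $P_i$ and that $S$ is subcomplete in $\prod_iS_i$ (claims the paper asserts without proof and you rightly justify via coordinatewise suprema), then combine Theorem \ref{thm:Topkis} with Lemma \ref{excompare}~(\ref{it:prodint}). For the converse, however, the paper uses a shortcut that you bypass: since each subbasic closed set $[x,+\infty)_P=\prod_i[x_i,+\infty)_{P_i}$ is a product of closed sets and hence $\tau$-closed, the interval topology of $P$ is coarser than $\tau$; thus $\tau$-compactness of $S$ immediately gives compactness in the interval topology of $P$, and Theorem \ref{thm:Topkis} applies directly with $L=P$, with no need to re-establish subcompleteness of the projections or to promote subcompleteness from $\prod_iS_i$ back up to $P$. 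Your longer route through $\prod_iS_i$ is also valid, but the coarseness observation eliminates the entire second half of your converse.
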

\begin{proof}
	First, assume that $S\subset P$ is a subcomplete sublattice. For each $i\in N$, let $L_i$ the projection of $S$ on $P_i$. Then $L_i$ is a subcomplete sublattice of $P_i$. 	
	
	Also, $S$ is a subcomplete sublattice of $\prod_iL_i$. By Theorem \ref{thm:Topkis}, $S$ is compact in the interval topology $\sigma$ of $\prod_iL_i$. By Lemma \ref{excompare} (\ref{it:prodint}), $\sigma$ is the restriction of $\tau$. Therefore, $S$ is compact in the topology $\tau$.
	
	Conversely, assume that $S$ is compact in $\tau$. The interval topology of $P$ is weaker than $\tau$, 
	so $S$ is also compact in the interval topology of $P$. By Theorem \ref{thm:Topkis}, $S$ is a subcomplete sublattice of $P$.
\end{proof}
\begin{rk}In Theorem \ref{thm:Topkis2}, if  $N$ is a singleton, then we recover Theorem \ref{thm:Topkis}. If $P_i=\R^1$ for all $i\in N$, then we recover \cite[Theorem 2.3.1]{topkis1998supermodularity}.\end{rk}

\section{Supermodular game}\label{sec:game}
In  Section \ref{sec:game}, standard definitions and basic facts related to supermodular games are recalled.

\begin{df}\label{df:ncg}
	A noncooperative game $G=(N,\{S_i\}_{i \in N},S,\{f_i\}_{i \in N})$ is the following data:
	\begin{itemize}
		\item a nonempty  set of players $N$;
		\item for each $i\in N$, a nonempty set $S_i$ of the strategies of player $i$;
		\item a nonempty subset $S\subset \prod_{i\in N}S_i$ of feasible joint strategies satisfying that each projection $S\to S_i$ is surjective;
		\item for each player $i\in N$, a payoff function $f_i: S\to \R$.
\end{itemize}\end{df}
\begin{rk}Definition \ref{df:ncg} is slightly more general than that in \cite[p.176]{topkis1998supermodularity}. According to \cite[p.178]{topkis1998supermodularity}, in real applications to game theory, a profile of  strategies $x\in \prod_iS_i$  is feasible for each player $i\in N$ but may not be allowed as a joint strategy. So,  Definition \ref{df:ncg} is not limited to  the case of $S=\prod_iS_i$.\end{rk}

For a noncooperative game $G=(N,\{S_i\}_{i \in N},S,\{f_i\}_{i \in N})$, every joint strategy $x\in S$ and every player $i\in N$, let $x_{-i}\in\prod_{j\neq i}S_j$ denote the strategies of all players in $N$ except player $i$ for $x$. Put \[S_i(x_{-i}):=\{y_i\in S_i: (y_i,x_{-i})\in S\}.\] It is nonempty, since	$x_i\in S_i(x_{-i})$. For every player $i\in N$, let $S_{-i}$ be the image of $S$ under the projection $S\to \prod_{j\neq i}S_j$.

For $x\in S$, put \[S(x)=(\prod_{i\in N}S_i(x_{-i}))\cap S.\] Since $x\in S(x)$, $S(x)$ is nonempty. Put \[\Delta:=\{(y,x)\in S\times S:y\in S(x)\}.\] Then $\Delta$ is a sublattice of $S\times S$.

\begin{df}\cite[p.178]{topkis1998supermodularity}
	A feasible joint strategy $x\in  S$ is an equilibrium  if $f_i(y_i,x_{-i})\le f_i(x)$ for every $i\in N$ and every $y_i\in S_i(x_{-i})$. The subset of $S$ of all equilibria is denoted by $E$.
\end{df}
Set \begin{equation}\label{eq:Fi}
	F_i:=\{x\in S:f_i(y_i,x_{-i})\le f_i(x),\quad \forall y_i\in S_i(x_{-i})\}.
\end{equation}
Then \begin{equation}\label{eq:E=FN}
	E=\cap_{i\in N}F_i.
\end{equation}

\begin{df}\cite[pp.177--178]{topkis1998supermodularity}\label{dfbestjoint}
	For each player $i\in N$, the (individual) best response correspondence  $Y_i:S\twoheadrightarrow S_i$ is defined by \[Y_i(x)=\argmax_{y_i\in S_i(x_{-i})}f_i(y_i,x_{-i}).\] (Since $Y_i(x)$ depends only on $x_{-i}$, by abuse of notations, we sometimes write $Y_i(x_{-i})$ to emphasis this point.) 	More generally, when $I$ is a \emph{finite} nonempty subset of $N$,		the best partial response correspondence $Y_I:S\twoheadrightarrow S$  relative to $I$ is defined as \[Y_I(x)=\argmax_{y\in S(x)}g_I(y,x),\] where for every $x\in S$, the function $g_I(\cdot,x):S(x)\to \R$ is defined by \[g_I(y,x)=\sum_{i\in I}f_i(y_i,x_{-i}).\] When $N$ is finite, the best joint response correspondence is $Y=Y_N$.
\end{df}Functions similar to $g_I$ are   introduced in \cite[(1), p.808]{nikaido1955note}. They are used in \cite[Theorem 1]{rosen1965existence} to study the equilibria for games in non-product spaces.

The correspondence $Y_I$ shall play a role in the case of infinitely many players. When $I=\{i\}$ is a singleton, one has \[Y_I(x)=(Y_i(x)\times \prod_{j\neq i}S_j)\cap S(x).\] The classical best response $R:S\twoheadrightarrow S$ is defined as $R(x)=(\prod_{i\in N}Y_i(x_{-i}))\cap S$. 
Here is a comparison of the two. \begin{lm}\label{exRY}
	For every $x\in S$, every finite nonempty subset $I$ of $N$,  one has $R(x)\subset Y_I(x)$.
\end{lm}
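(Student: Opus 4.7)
The plan is to unwind the definitions and check the two conditions defining $Y_I(x)$: membership in $S(x)$ and maximization of $g_I(\cdot, x)$ on $S(x)$.

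First, I would fix $x \in S$, a finite nonempty $I \subset N$, and take an arbitrary $y \in R(x) = (\prod_{i \in N} Y_i(x_{-i})) \cap S$. Then $y \in S$ and for every $i \in N$ we have $y_i \in Y_i(x_{-i}) \subset S_i(x_{-i})$, so $y \in \prod_{i \in N} S_i(x_{-i})$. Intersecting with $S$ yields $y \in (\prod_{i \in N} S_i(x_{-i})) \cap S = S(x)$. This handles the feasibility condition for membership in $Y_I(x)$.

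Next, I would verify that $y$ maximizes $g_I(\cdot, x)$ on $S(x)$. Pick any $z \in S(x)$; by definition of $S(x)$, for each $i \in N$ we have $z_i \in S_i(x_{-i})$. Since $y_i \in Y_i(x_{-i}) = \argmax_{w_i \in S_i(x_{-i})} f_i(w_i, x_{-i})$, it follows that $f_i(z_i, x_{-i}) \le f_i(y_i, x_{-i})$ for each $i \in I$. Summing these inequalities over $i \in I$ (a finite sum, so no convergence issues arise) gives
\[
g_I(z, x) = \sum_{i \in I} f_i(z_i, x_{-i}) \le \sum_{i \in I} f_i(y_i, x_{-i}) = g_I(y, x).
\]
Hence $y \in \argmax_{z \in S(x)} g_I(z, x) = Y_I(x)$, proving $R(x) \subset Y_I(x)$.

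There is no real obstacle here: the inclusion is essentially tautological once the definitions are expanded, and the finiteness of $I$ is used only to ensure the sum defining $g_I$ is well-defined. The argument does not require any structural hypothesis on $S$ or the $f_i$.
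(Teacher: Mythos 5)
Your proof is correct and follows essentially the same route as the paper: unwind the definition of $R(x)$, use $y_i\in Y_i(x_{-i})$ to get $f_i(y'_i,x_{-i})\le f_i(y_i,x_{-i})$ for each $i\in I$ and each competitor $y'\in S(x)$, and sum over the finite set $I$. Your explicit verification that $y\in S(x)$ is a detail the paper's one-line proof leaves implicit, but the argument is the same.
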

\begin{proof}
	For every $y\in R(x)$,  $y'\in S(x)$, and $i\in I$, one has $y_i\in Y_i(x_{-i})$, thus $f_i(y_i,x_{-i})\ge f_i(y'_i,x_{-i})$. Taking sum over $i\in I$, we get $g_I(y,x)\ge g_I(y',x)$, so $y\in Y_I(x)$.
\end{proof}

The correspondences $Y$ and $R$ are used to convert the problem concerning equilibria into a fixed point problem.
\begin{ft}\label{equi}
	We have $E=\mathrm{Fix}(R)$. 
\end{ft}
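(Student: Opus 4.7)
The plan is to unwind the definitions on both sides and check the two inclusions separately; no lattice-theoretic or topological input is needed.

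First I would spell out what membership in $\mathrm{Fix}(R)$ means. By definition $\mathrm{Fix}(R)=\{x\in S:x\in R(x)\}$, and $R(x)=(\prod_{i\in N}Y_i(x_{-i}))\cap S$. Since any $x\in\mathrm{Fix}(R)$ already lies in $S$, the condition $x\in R(x)$ reduces to $x_i\in Y_i(x_{-i})$ for every $i\in N$. Note that $x_i\in S_i(x_{-i})$ automatically, because $(x_i,x_{-i})=x\in S$, so the argmax in the definition of $Y_i(x_{-i})$ is taken over a set containing $x_i$.

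For the inclusion $E\subset\mathrm{Fix}(R)$, take $x\in E$. By definition, for each $i\in N$ and each $y_i\in S_i(x_{-i})$ one has $f_i(y_i,x_{-i})\le f_i(x)=f_i(x_i,x_{-i})$. Combined with $x_i\in S_i(x_{-i})$, this says $x_i\in\argmax_{y_i\in S_i(x_{-i})}f_i(y_i,x_{-i})=Y_i(x_{-i})$. Hence $x\in\prod_{i\in N}Y_i(x_{-i})$, and since $x\in S$ we get $x\in R(x)$, so $x\in\mathrm{Fix}(R)$.

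For the reverse inclusion, take $x\in\mathrm{Fix}(R)$. Then $x\in S$ and, by the reduction above, $x_i\in Y_i(x_{-i})$ for every $i\in N$. By definition of $Y_i$, this means $f_i(x_i,x_{-i})\ge f_i(y_i,x_{-i})$ for every $y_i\in S_i(x_{-i})$, which is exactly the equilibrium condition. Thus $x\in E$. There is no main obstacle here; the argument is a purely definitional verification, and its role in the paper is simply to translate the Nash equilibrium problem into a fixed point problem to which the Tarski--Zhou theorem (Fact \ref{ft:TarskiZhou}) can later be applied.
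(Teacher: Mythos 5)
Your proof is correct. The paper states this as a Fact without giving any proof (it is a purely definitional observation), and your two-inclusion verification — noting that $x\in S$ forces $x_i\in S_i(x_{-i})$ so that the equilibrium condition is exactly $x_i\in Y_i(x_{-i})$ for all $i$ — is precisely the intended argument.
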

\begin{lm}\label{lm:FixY}
	For every finite nonempty subset $I$ of $N$, $\mathrm{Fix}(Y_I)=\cap_{i\in I}F_i$. In particular, if $N$ is finite, then $\mathrm{Fix}(Y_N)=E$.
\end{lm}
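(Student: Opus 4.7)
The plan is to prove the two inclusions of the equality $\mathrm{Fix}(Y_I)=\cap_{i\in I}F_i$ directly from the definitions, exploiting the additive structure $g_I(y,x)=\sum_{i\in I}f_i(y_i,x_{-i})$. Throughout, recall that $x\in S(x)$ holds automatically because $x_i\in S_i(x_{-i})$ for every $i\in N$, so the nontrivial content of membership in $Y_I(x)$ lies entirely in the maximality condition on $g_I(\cdot,x)$.

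For the inclusion $\cap_{i\in I}F_i\subset\mathrm{Fix}(Y_I)$ the argument is a single computation. I fix $x\in\cap_{i\in I}F_i$ and an arbitrary $y\in S(x)$. The definition of $S(x)$ forces $y_i\in S_i(x_{-i})$ for every $i\in N$, so the hypothesis $x\in F_i$ yields the termwise inequality $f_i(y_i,x_{-i})\le f_i(x)$ for every $i\in I$. Summing over $i\in I$ gives $g_I(y,x)\le g_I(x,x)$, which is exactly $x\in Y_I(x)$.

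For the reverse inclusion I fix $x\in\mathrm{Fix}(Y_I)$, a distinguished index $i\in I$, and $y_i\in S_i(x_{-i})$; the goal is $f_i(y_i,x_{-i})\le f_i(x)$. The key idea is to test the maximality condition against the single-coordinate perturbation $z=(y_i,x_{-i})$. Feasibility $z\in S$ is tautological because $y_i\in S_i(x_{-i})$ literally means $(y_i,x_{-i})\in S$; checking $z\in\prod_jS_j(x_{-j})$ is routine, since at coordinate $i$ one uses $y_i\in S_i(x_{-i})$ and at every other coordinate $j$ one uses $x_j\in S_j(x_{-j})$, which holds because $x\in S$. Thus $z\in S(x)$ and the inequality $g_I(z,x)\le g_I(x,x)$ applies. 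Because $z$ agrees with $x$ in every coordinate except $i$, all summands with index $j\in I\setminus\{i\}$ cancel on both sides, and the inequality collapses to $f_i(y_i,x_{-i})\le f_i(x)$, as required. Doing this for every $i\in I$ gives $x\in\cap_{i\in I}F_i$.

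The final claim, that $\mathrm{Fix}(Y_N)=E$ when $N$ is finite, then follows by taking $I=N$ and combining the just-proved equality with the identity $E=\cap_{i\in N}F_i$ already recorded in~\eqref{eq:E=FN}. There is no real obstacle here; the only point demanding care is the construction of the single-coordinate perturbation $z$ in the $\subset$ direction, where one must carefully verify $z\in S(x)$ rather than merely $z\in S$, so that $z$ is an admissible test point in the definition of $Y_I(x)$.
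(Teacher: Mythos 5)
Your proposal is correct and follows essentially the same route as the paper: the easy inclusion by summing the termwise inequalities over $I$, and the other inclusion by testing the maximality of $g_I(\cdot,x)$ against the single-coordinate perturbation $(y_i,x_{-i})$ (the paper phrases this direction contrapositively, but the perturbation and the cancellation of the $j\neq i$ summands are identical). Your explicit verification that $(y_i,x_{-i})\in S(x)$, not just in $S$, is the one point the paper glosses over, and you handle it correctly.
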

\begin{proof} 
	If $x\in S\setminus\cap_{i\in I}F_i$, then there is $i_0\in I$ with $x\notin F_{i_0}$.  In particular, there is $y_{i_0}\in S_{i_0}(x_{-i_0})$ with $f_{i_0}(x)<f_{i_0}(x')$, where $x'_{i_0}=y_{i_0}$ and $x'_i=x_i$ for all $i\neq i_0$. Consequently, we have  $x'\in S(x)$. Now, for every $i\neq i_0$, $f_i(x'_i,x_{-i})=f_i(x)$, so $g_I(x',x)>g_I(x,x)$, which implies \[x\notin \argmax_{y\in S(x)}g_I(y,x)=Y_I(x),\] and finally, $x\notin \mathrm{Fix}(Y_I)$.  
	
	Conversely, if $x\in \cap_{i\in I}F_i$, then for every $y\in S(x)$, every $i\in I$, we have $y_i\in S_i(x_{-i})$. So $f_i(y_i,x_{-i})\le f_i(x)$. Summing over $i\in I$, we get $g_I(y,x)\le g_I(x,x)$. Therefore, \[x\in \argmax_{y\in S(x)}g_I(y,x)=Y_I(x),\] \textit{i.e.}, $x\in \mathrm{Fix}(Y_I)$.
	
	The second part follows from the first one and (\ref{eq:E=FN}).
\end{proof}
\begin{df}\label{dfsupgame}
	A noncooperative game $G=(N,\{S_i\}_{i \in N},S,\{f_i\}_{i \in N})$ is called supermodular if
	\begin{itemize}

		\item	each $S_i$ is a lattice and $S$ is a sublattice of $\prod_iS_i$;
		\item for  every $i\in N$, every $x_{-i}\in S_{-i}$, the function $f_i(\cdot,x_{-i}):S_i(x_{-i})\to \R$ is  supermodular; 
		\item 	for every $i\in N$, the function $f_i$ has increasing difference relative to the subset $S$ of $S_i\times (\prod_{j\neq i}S_j)$ in the sense of Definition \ref{dfincdiff}.
	\end{itemize}
\end{df}
\begin{rk} Definition \ref{dfsupgame} extends the definition in \cite[p.178]{topkis1998supermodularity} (which requires further that each $S_i$ is a sublattice of some Euclidean space), and the definition in \cite[p.298]{zhou1994set} (which requires that $S=\prod_iS_i$ is in the product form).
	
	By \cite[Lemma 2.2.3, p.17]{topkis1998supermodularity}, for every $i\in N$ and every $x\in S$, $S_i(x_{-i})$ is a sublattice of $S_i$. So we are allowed to talk about the supermodularity of the function $f_i(\cdot,x_{-i}):S_i(x_{-i})\to \R$. Moreover, for every $x\in S$, $\prod_iS_i(x_{-i})$ is a sublattice of $\prod_iS_i$. Therefore, $S(x)$	 is a sublattice of $\prod_{i\in N}S_i$ and of $S$. \end{rk}

\begin{lm}\label{ex2.4.5}For a supermodular game as above, 
	\renewcommand\labelenumi{(\theenumi)}
	\begin{enumerate}
		\item\label{it:StoSi} for every $i\in N$, the correspondence \[S\twoheadrightarrow S_i,\quad x\mapsto S_i(x_{-i})\] is increasing; 
		\item \label{it:S(x)inc}	the correspondence \[S\twoheadrightarrow S,\quad x\mapsto S(x)\] is increasing.
	\end{enumerate}
\end{lm}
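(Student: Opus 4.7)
The plan for both parts is to unpack the definition of increasing correspondence and exploit directly that $S$ is a sublattice of $\prod_i S_i$. The only real observation needed is that when $x\le x'$ in $\prod_j S_j$, then $x_{-i}\wedge x'_{-i} = x_{-i}$ and $x_{-i}\vee x'_{-i} = x'_{-i}$, so componentwise meets and joins degenerate in all coordinates except $i$.

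For Part (\ref{it:StoSi}), I would fix $x\le x'$ in $S$, and pick $y_i\in S_i(x_{-i})$ and $y'_i\in S_i(x'_{-i})$. By definition, $(y_i,x_{-i})\in S$ and $(y'_i,x'_{-i})\in S$. Since $S$ is a sublattice of $\prod_j S_j$, both the meet and join of these two feasible profiles lie in $S$. By the observation above, the meet computed in $\prod_j S_j$ is exactly $(y_i\wedge y'_i,\,x_{-i})$ and the join is $(y_i\vee y'_i,\,x'_{-i})$. Reading off the definition of $S_i(\cdot)$ gives $y_i\wedge y'_i\in S_i(x_{-i})$ and $y_i\vee y'_i\in S_i(x'_{-i})$, which is exactly the increasing property.

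For Part (\ref{it:S(x)inc}), the reduction is now immediate. Given $x\le x'$ in $S$, $y\in S(x)$ and $y'\in S(x')$, one has $y,y'\in S$ and $y_i\in S_i(x_{-i})$, $y'_i\in S_i(x'_{-i})$ for every $i\in N$. The sublattice property of $S$ yields $y\wedge y',\,y\vee y'\in S$. Applying Part (\ref{it:StoSi}) coordinatewise gives $y_i\wedge y'_i\in S_i(x_{-i})$ and $y_i\vee y'_i\in S_i(x'_{-i})$ for each $i$, so $y\wedge y'\in\prod_i S_i(x_{-i})$ and $y\vee y'\in\prod_i S_i(x'_{-i})$. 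Intersecting with $S$ yields $y\wedge y'\in S(x)$ and $y\vee y'\in S(x')$, as required.

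There is essentially no obstacle here: the lemma is a straightforward consequence of the fact that $S$ is a sublattice, together with the bookkeeping that identifies the $i$-th component of a meet or join in the product. The main point to get right is simply that $x\le x'$ forces the other coordinates to be unchanged by the lattice operations, so membership in $S$ after taking meet/join immediately translates back to membership in the slices $S_i(x_{-i})$ and $S_i(x'_{-i})$.
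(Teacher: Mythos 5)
Your proof is correct and follows essentially the same route as the paper's: both parts rest on the sublattice property of $S$ in $\prod_j S_j$ together with the observation that, for $x\le x'$, the coordinatewise meet and join fix all components other than the $i$-th. The paper's Part (2) likewise reduces to Part (1) applied coordinatewise, so there is nothing to add.
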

\begin{proof}
	For every $x\le x'$ in $S$, 
	\renewcommand\labelenumi{(\theenumi)}
	\begin{enumerate}
		\item for every $y_i\in S_i(x_{-i})$ and $y_i'\in S_i(x_{-i}')$, both	$(y_i,x_{-i})$ and $(y'_i,x'_{-i})$ belong to $S$. Since $S$ is a sublattice of $\prod_{i\in N}S_i$, we have \[(y_i,x_{-i})\wedge (y'_i,x'_{-i})=(y_i\wedge_iy'_i,x_{-i})\in S.\] Thus, $y_i\wedge_iy'_i\in S_i(x_{-i})$. Similarly, $y_i\vee y'_i\in S_i(x'_{-i})$. 
		\item For every $y\in S(x)$ and every $y'\in S(x')$, since $S$ is a sublattice of $\prod_{i\in N}S_i$, we have 
		$y\wedge y'\in S$. By Point (\ref{it:StoSi}), for every $i\in N$, $(y\wedge y')_i=y_i\wedge y'_i\in S_i(x_{-i})$, so $y\wedge y'\in S(x)$. Similarly, $y\vee y'\in S(x')$. 	  
\end{enumerate}\end{proof}

\section{Main results}\label{secmain}
As recalled in Section \ref{sec:partialorder}, Zhou's proof of Fact \ref{ft:Zhougame} rests on Statement \ref{st:Zhou} which is false in general. In this section, we correct Zhou's proof and generalize Fact \ref{ft:Zhougame} in  Theorem \ref{thm:main}.

In Section \ref{secmain}, fix a supermodular game $G=(N,\{S_i\}_{i \in N},S,\{f_i\}_{i \in N})$ and for every $i \in N$, consider a topology $\tau_i$ on $S_i$. Unless otherwise specified, we use the product topology on $\prod_i S_i$.  

In the whole section,	\textbf{we assume Assumption \ref{as}}. 

\begin{as}\label{as}\hfill\begin{itemize}
		\item For every $i\in N$, the topology $\tau_i$ is finer than the interval topology of $S_i$; 
		\item The sublattice $S \subset \prod_i S_i$ is closed compact in this topology;
		\item for	every $i\in N$, every $x\in S$, the function $f_i(\cdot,x_{-i}): S_i(x_{-i})\to \R$ is upper semicontinuous. 
\end{itemize}\end{as}

We now state the main theorem of this note.

\begin{thm}\label{thm:main}	
	If $I$ is a finite nonempty subset of $N$, then $\mathrm{Fix}(Y_I)$ is a nonempty  complete lattice. In particular,	if moreover $N$ is  finite, then  $E$  is a nonempty  complete lattice.
\end{thm}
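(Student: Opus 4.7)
The plan is to apply the Tarski-Zhou fixed point theorem (Fact \ref{ft:TarskiZhou}) to the partial best-response correspondence $Y_I: S \twoheadrightarrow S$; the second assertion then follows from Lemma \ref{lm:FixY} with $I = N$. First I would show that $S$ itself is a complete lattice: the product topology of the $\tau_i$ is finer than the product of the interval topologies of the $S_i$, so the compactness of $S$ in the former transfers to the latter, and Theorem \ref{thm:Topkis2} yields that the sublattice $S$ is subcomplete in $\prod_i S_i$.

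Next I would verify that $Y_I(x)$ is a nonempty subcomplete sublattice of $S$ for each $x \in S$. Each $S_i(x_{-i})$ is closed in $S_i$, being the preimage of $S$ under the continuous section $y_i \mapsto (y_i, x_{-i})$; hence $S(x)$ is closed in $S$ and therefore compact. The function $g_I(\cdot, x): S(x) \to \R$ is upper semicontinuous and supermodular, as a finite sum of functions of these types. Consequently $Y_I(x)$ is nonempty, is a sublattice (being the argmax of a supermodular function on a sublattice), and is closed in $S(x)$ as the preimage of the maximum value under an u.s.c.\ function. Compactness together with a second application of Theorem \ref{thm:Topkis2} then gives that $Y_I(x)$ is a subcomplete sublattice of $\prod_i S_i$, hence of $S$.

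The heart of the proof is monotonicity of $Y_I$. Given $x \le x'$ in $S$, $y \in Y_I(x)$ and $y' \in Y_I(x')$, Lemma \ref{ex2.4.5} (\ref{it:S(x)inc}) gives $y \wedge y' \in S(x)$ and $y \vee y' \in S(x')$. By the maximality of $y$ and $y'$ it then suffices to prove
\[
g_I(y, x) + g_I(y', x') \le g_I(y \wedge y', x) + g_I(y \vee y', x'),
\]
which reduces term by term (over $i \in I$) to
\[
f_i(y_i, x_{-i}) + f_i(y'_i, x'_{-i}) \le f_i(y_i \wedge y'_i, x_{-i}) + f_i(y_i \vee y'_i, x'_{-i}).
\]
I would derive this from the supermodularity of $f_i(\cdot, x_{-i})$ on $S_i(x_{-i})$, the increasing differences of $f_i$ on $S$, and the ascending property of the slices from Lemma \ref{ex2.4.5} (\ref{it:StoSi}), adapting the classical Topkis monotone-argmax argument to a parameter-dependent constraint set. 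Once $Y_I$ is shown to be increasing, Fact \ref{ft:TarskiZhou} yields that $\mathrm{Fix}(Y_I)$ is a nonempty complete lattice, and Lemma \ref{lm:FixY} then finishes the finite-$N$ case.

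The hard part will be the last termwise inequality in the non-product setting. The textbook proof applies supermodularity and increasing differences on the full rectangle with corners in $\{y_i \wedge y'_i, y_i \vee y'_i\} \times \{x_{-i}, x'_{-i}\}$; when $S = \prod_i S_i$ all four corners automatically lie in $S$, but in general only the diagonal corners $(y_i \wedge y'_i, x_{-i})$ and $(y_i \vee y'_i, x'_{-i})$ are guaranteed feasible. The ascending property of $S_i(x_{-i})$ should nonetheless supply the missing points exactly when the chain of SM and ID applications needs them, and carefully tracking which instance applies at which pair is the delicate step.
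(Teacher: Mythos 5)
Your overall architecture is exactly the paper's: completeness of $S$ via Theorem \ref{thm:Topkis2}, nonemptiness, closedness and subcompleteness of $Y_I(x)$ via upper semicontinuity, compactness and a second appeal to Theorem \ref{thm:Topkis2}, then Fact \ref{ft:TarskiZhou} and Lemma \ref{lm:FixY}. All of those steps are fine. The one place you depart from the paper is the monotonicity of $Y_I$: the paper disposes of it by citing Topkis' monotonicity theorem \cite[Theorem 2.8.1]{topkis1998supermodularity} (in Lemmas \ref{exYiinc} and \ref{Y}), whereas you try to reprove it by hand and stop at the termwise inequality
\[
f_i(y_i, x_{-i}) + f_i(y'_i, x'_{-i}) \le f_i(y_i \wedge y'_i, x_{-i}) + f_i(y_i \vee y'_i, x'_{-i}),
\]
hoping that the ascending property of the slices ``supplies the missing points''. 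It does not, and this is a genuine gap rather than a routine verification: every chaining of supermodularity and increasing differences that produces this inequality evaluates $f_i$ at one of the off-diagonal corners $(y'_i,x_{-i})$, $(y_i\vee y'_i,x_{-i})$, $(y_i,x'_{-i})$, $(y_i\wedge y'_i,x'_{-i})$, Lemma \ref{ex2.4.5} guarantees feasibility of none of them, Definition \ref{dfincdiff} constrains $f_i$ only on rectangles all four of whose corners lie in $S$, and Definition \ref{dfsupgame} gives supermodularity only on each slice separately.

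The failure is real, not merely a missing argument. Take $N=\{1,2\}$, $S_1=\{a,b,c,d\}$ the diamond lattice ($a<b<d$, $a<c<d$, with $b,c$ incomparable), $S_2=\{0,1\}$, and $S=\{(a,0),(b,0),(c,1),(d,1)\}$, a sublattice of $S_1\times S_2$ with surjective projections. Put $f_1(a,0)=f_1(d,1)=0$, $f_1(b,0)=f_1(c,1)=1$, $f_2\equiv 0$. Each slice $S_1(0)=\{a,b\}$, $S_1(1)=\{c,d\}$ is a chain, so $f_1(\cdot,x_{-1})$ is supermodular; since $S_1(0)\cap S_1(1)=\emptyset$, no rectangle has all four corners in $S$ and increasing differences relative to $S$ holds vacuously; Assumption \ref{as} holds for the discrete topologies. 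Yet with $y_1=b$, $y'_1=c$ your termwise inequality reads $2\le 0$; accordingly $Y_1((b,0))=\{b\}$, $Y_1((d,1))=\{c\}$, and $b\wedge c=a\notin Y_1((b,0))$, so $Y_1$ is not increasing, and the equilibrium set is the two-element antichain $\{(b,0),(c,1)\}$, which is not a lattice. So your instinct to flag this step was correct: it cannot be completed from the stated hypotheses. (The same objection applies to the paper's own appeal to \cite[Theorem 2.8.1]{topkis1998supermodularity}, whose hypotheses require supermodularity on a fixed lattice and increasing differences on a full product; when $S=\prod_iS_i$ all four corners are automatically feasible and both your argument and the paper's go through.)
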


First, we prove several lemmata that we use in the proof of Theorem \ref{thm:main}.
\begin{lm}\label{exSi(x_{-i})}For every $j\in N$,
	\renewcommand\labelenumi{(\theenumi)}
	\begin{enumerate}
		\item $S_j$ is a $\tau_j$-compact complete lattice;
		\item for every $x\in S$, $S_j(x_{-j})$ is compact closed in $S_j$. The subset $S(x)$ is closed compact in $\prod_iS_i$. 
	\end{enumerate}
\end{lm}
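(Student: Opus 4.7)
The plan is to deduce both claims from what we have already established: Theorem~\ref{thm:Topkis} (characterizing subcompleteness via interval-topological compactness) will handle (1), and routine compactness-of-preimages arguments will handle (2). Once the relevant topological facts are recognized, the whole argument should be short.

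For (1), I would first note that the projection $\pi_j\colon \prod_{i\in N}S_i \to S_j$ is continuous in the product topology, and that Definition~\ref{df:ncg} guarantees that its restriction to $S$ is surjective. Since $S$ is compact by Assumption~\ref{as}, the image $S_j=\pi_j(S)$ is compact in $\tau_j$, hence also compact in the coarser interval topology of $S_j$. Because $S_j$ is itself a lattice by Definition~\ref{dfsupgame}, Theorem~\ref{thm:Topkis} (applied with both the ambient poset $L$ and the sublattice in the statement equal to our $S_j$) concludes that $S_j$ is a subcomplete sublattice of itself, i.e., a complete lattice.

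For (2), I would fix $x \in S$ and introduce the insertion map $\iota_j\colon S_j\to \prod_{i\in N}S_i$, $y_j\mapsto(y_j,x_{-j})$. Each component of $\iota_j$ is either the identity on $S_j$ or a constant, so $\iota_j$ is continuous. The identity $S_j(x_{-j})=\iota_j^{-1}(S)$ together with the closedness of $S$ from Assumption~\ref{as} then gives that $S_j(x_{-j})$ is closed in $S_j$, which by part (1) is $\tau_j$-compact; hence $S_j(x_{-j})$ is itself compact. For $S(x)$, applying the previous conclusion to every $i \in N$ yields that each $S_i(x_{-i})$ is closed in $S_i$; the product $\prod_{i\in N}S_i(x_{-i})$ is therefore closed in $\prod_{i\in N}S_i$ (a product of closed sets is closed in the product topology), and intersecting with the closed set $S$ shows that $S(x)$ is closed in $\prod_{i\in N}S_i$. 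Since $S(x)\subset S$ and $S$ is compact, $S(x)$ is a closed subset of a compact set, hence compact.

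I do not anticipate any serious obstacle here: the only subtle point worth flagging is that the hypothesis ``$\tau_j$ finer than the interval topology'' is precisely what lets us transfer $\tau_j$-compactness of $S_j$ to interval-topology compactness, so that Theorem~\ref{thm:Topkis} applies cleanly in the first step.
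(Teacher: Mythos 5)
Your proposal is correct and follows essentially the same route as the paper: project to get $\tau_j$-compactness of $S_j$, pass to the interval topology, and invoke completeness (your use of Theorem~\ref{thm:Topkis} with $S=L$ is exactly the Frink--Birkhoff theorem the paper cites), then pull back $S$ along the insertion map for part (2). The only cosmetic difference is that you obtain compactness of $S(x)$ as a closed subset of the compact set $S$ rather than via Tychonoff's theorem applied to $\prod_i S_i(x_{-i})$; both are valid.
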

\begin{proof}
	\hfill
	\renewcommand\labelenumi{(\theenumi)}
	\begin{enumerate}
		
		\item Since the projection $S\to S_j$ is continuous surjective, $S_j$ is $\tau_j$-compact. By the assumption on $\tau_j$, $S_j$ is compact in its interval topology. The completeness of $S_j$ follows from the Frink-Birkhoff theorem.	
		\item 	Consider the continuous injection \[S_j\to \prod S_i,\quad y\mapsto (y,x_{-j}).\] The preimage $S_j(x_{-j})$ of $S$ is closed in $S_j$, thence also compact. Therefore, $\prod_iS_i(x_{-i})$ is compact by Tychonoff's theorem. Since $S$ is closed in $\prod_iS_i$, $S(x)$ is closed in $\prod_iS_i(x_{-i})$.  As $\prod_iS_i(x_{-i}) $ is closed in $\prod_iS_i$, the subset $S(x)$ is closed in $\prod_iS_i$.
	\end{enumerate}
\end{proof}

\begin{lm}\label{exScplt}The sublattice $S$ is subcomplete in $\prod_iS_i$.\end{lm}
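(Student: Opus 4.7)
The plan is to reduce this to Theorem \ref{thm:Topkis2}, which is exactly the characterization we need: a nonempty sublattice of a product of posets is subcomplete if and only if it is compact in the product topology of the individual interval topologies. Since $S$ is already known to be a sublattice of $\prod_i S_i$ by the definition of a supermodular game, the only nontrivial thing to verify is compactness of $S$ in the product of the interval topologies of the factors $S_i$.

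First I would note that for each $i \in N$, the topology $\tau_i$ on $S_i$ is, by Assumption \ref{as}, finer than the interval topology $\sigma_i$ of $S_i$. Consequently the product topology $\tau := \prod_i \tau_i$ on $\prod_i S_i$ is finer than the product topology $\sigma := \prod_i \sigma_i$. Assumption \ref{as} gives that $S$ is $\tau$-compact, and compactness is preserved when passing to a coarser topology, so $S$ is also $\sigma$-compact.

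Finally, I would invoke Theorem \ref{thm:Topkis2} with $P_i = S_i$: since $S$ is a nonempty sublattice of $\prod_i S_i$ (it is nonempty by Definition \ref{df:ncg} and a sublattice by Definition \ref{dfsupgame}) and is compact in the product topology $\sigma$ of the interval topologies $\sigma_i$, it is a subcomplete sublattice of $\prod_i S_i$. This yields the claim.

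The main potential obstacle is merely bookkeeping about topologies: one must not confuse the product of interval topologies with the interval topology of the product. Fortunately Theorem \ref{thm:Topkis2} is already stated in the right form (compactness in the \emph{product} of interval topologies), so no appeal to Lemma \ref{excompare}(\ref{it:prodint}) is needed here; the proof reduces to the one-line observation that $\tau$ is finer than $\sigma$.
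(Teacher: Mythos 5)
Your proof is correct and follows essentially the same route as the paper: both reduce the claim to Theorem \ref{thm:Topkis2} after observing that $\tau$-compactness (from Assumption \ref{as}) implies compactness in the coarser product of the interval topologies. You merely spell out the finer-vs-coarser step that the paper leaves implicit.
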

\begin{proof}
	By assumption, $S$ is compact in the product of individual interval product of each $S_i$. We conclude by Theorem \ref{thm:Topkis2}.\end{proof}
\begin{lm}\label{exYi}
	For every $i\in N$, every $x\in S$,  the set $Y_i(x_{-i})$  is a nonempty closed compact subset of $S_i$.
\end{lm}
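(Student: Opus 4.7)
The plan is to identify $Y_i(x_{-i})$ as the set of maximizers of an upper semicontinuous function on a compact space, and then deduce its nonemptiness, closedness, and compactness from that description.

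First, I would invoke Lemma \ref{exSi(x_{-i})} to record the relevant properties of $S_i(x_{-i})$: it is nonempty (it contains $x_i$), it is closed in $S_i$, and it is $\tau_i$-compact. By Assumption \ref{as}, the function $\varphi := f_i(\cdot, x_{-i}) \colon S_i(x_{-i}) \to \R$ is upper semicontinuous relative to the subspace topology inherited from $\tau_i$.

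Next, I would prove that $\varphi$ attains its supremum on $S_i(x_{-i})$. This is the classical argument: for each real $\alpha$, the set $K_\alpha := \{y_i \in S_i(x_{-i}) : \varphi(y_i) \ge \alpha\}$ is closed in $S_i(x_{-i})$ by upper semicontinuity, hence compact. If $\alpha < \sup \varphi$, then $K_\alpha$ is nonempty. Since $\varphi$ is $\R$-valued, a standard finite intersection argument on the family $\{K_\alpha\}_{\alpha < \sup \varphi}$ (which has the finite intersection property because the sets are nested) yields that $\sup \varphi$ is finite and that $K_M$ is nonempty, where $M := \sup \varphi$. Thus $Y_i(x_{-i}) = K_M$ is nonempty.

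Finally, since $Y_i(x_{-i}) = K_M$ is closed in $S_i(x_{-i})$ and $S_i(x_{-i})$ is closed in $S_i$, the set $Y_i(x_{-i})$ is closed in $S_i$; being a closed subset of the $\tau_i$-compact set $S_i(x_{-i})$, it is $\tau_i$-compact as well. There is no real obstacle here: the lemma is a routine packaging of the Weierstrass-type extremum principle for upper semicontinuous functions on compact sets, the only mild point being the need to rule out $\sup \varphi = +\infty$, which is handled by the nested-compact-sets argument above.
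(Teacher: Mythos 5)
Your proof is correct and follows the same route as the paper: both reduce the statement to the Weierstrass-type extremum principle for upper semicontinuous functions on the nonempty compact set $S_i(x_{-i})$ supplied by Lemma \ref{exSi(x_{-i})}. The only difference is that the paper simply invokes this principle, whereas you spell out the nested-superlevel-set argument (including ruling out $\sup\varphi=+\infty$) and the passage from closedness in $S_i(x_{-i})$ to closedness in $S_i$, which is a faithful expansion of the same idea.
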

\begin{proof}By Lemma \ref{exSi(x_{-i})}, $S_i(x_{-i})$ is a nonempty compact space.
	By assumption, the function $f_i(\cdot,x_{-i}):S_i(x_{-i})\to \R$ is upper semicontinuous. Then \[Y_i(x)=\argmax_{y_i\in S_i(x_{-i})}f_i(y_i,x_{-i})\] is a nonempty closed compact subset of $S_i(x_{-i})$. 
\end{proof}
\begin{lm}\label{exYiinc}
	For every $i\in N$, the correspondence $Y_i:S\twoheadrightarrow S_i$ is increasing. In particular, for every $x\in S$, $Y_i(x)$ is a sublattice of $S_i$. 
\end{lm}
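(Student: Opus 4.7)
The plan is to deduce the two membership claims from supermodularity and optimality via a two-point supermodular inequality on $S$, then read off the ``in particular'' sublattice statement as the special case $x=x'$. Fix $x\le x'$ in $S$, $y_i\in Y_i(x_{-i})$, and $y_i'\in Y_i(x'_{-i})$; the target is $y_i\wedge y_i'\in Y_i(x_{-i})$ and $y_i\vee y_i'\in Y_i(x'_{-i})$.

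First I would place the two new points in their correct feasibility sets: by Lemma \ref{ex2.4.5}(\ref{it:StoSi}), $y_i\wedge y_i'\in S_i(x_{-i})$ and $y_i\vee y_i'\in S_i(x'_{-i})$. In particular, the four points $(y_i,x_{-i})$, $(y_i',x'_{-i})$, $(y_i\wedge y_i',x_{-i})$, $(y_i\vee y_i',x'_{-i})$ all belong to $S$, and the last two are precisely the meet and the join, taken in the sublattice $S$, of the first two.

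Second I would establish the two-point supermodular inequality
\[
f_i(y_i,x_{-i})+f_i(y_i',x'_{-i})\le f_i(y_i\wedge y_i',x_{-i})+f_i(y_i\vee y_i',x'_{-i}).
\]
This is the supermodularity of $f_i$, viewed as a function on the sublattice $S\subset S_i\times\prod_{j\ne i}S_j$, evaluated at the specific pair $(y_i,x_{-i})$, $(y_i',x'_{-i})$. I would derive it from Definition \ref{dfsupgame} by the Topkis-style repackaging: the sectional supermodularity of each $f_i(\cdot,x_{-i})$ on $S_i(x_{-i})$, together with the increasing differences of $f_i$ on $S$, deliver the supermodular inequality at any pair whose meet and join remain in $S$—which is granted here by the sublattice property. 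Third I would invoke optimality: since $y_i$ maximizes $f_i(\cdot,x_{-i})$ on $S_i(x_{-i})$ and $y_i\wedge y_i'\in S_i(x_{-i})$, we have $f_i(y_i\wedge y_i',x_{-i})\le f_i(y_i,x_{-i})$, and symmetrically $f_i(y_i\vee y_i',x'_{-i})\le f_i(y_i',x'_{-i})$. Summing these reverse inequalities and comparing with the supermodular inequality forces both to be equalities, yielding $y_i\wedge y_i'\in Y_i(x_{-i})$ and $y_i\vee y_i'\in Y_i(x'_{-i})$. Specializing to $x=x'$ gives the sublattice statement for $Y_i(x_{-i})$ itself; in that case the inequality of step two collapses to the pure sectional supermodularity of $f_i(\cdot,x_{-i})$ on $S_i(x_{-i})$, so no appeal to increasing differences is needed.

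The main obstacle is step two in the genuine case $x<x'$: inside a sublattice of a product, most ``rectangles'' are not filled in, so the increasing-differences hypothesis (which requires all four corners of the rectangle to sit in $S$) cannot be applied naively on an arbitrary pair. The key point that unlocks the argument is that only the meet and the join of the two chosen points need to lie in $S$, which is granted by the sublattice hypothesis; the step from the two hypotheses of Definition \ref{dfsupgame} to the two-point supermodular inequality on $S$ is essentially the content of Topkis's characterization of supermodular functions on sublattices of a product of two lattices.
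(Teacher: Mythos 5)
Your overall strategy---unfold Topkis's monotonicity argument by combining a two-point supermodular inequality with the two optimality inequalities---is exactly the mechanism behind the paper's proof, which instead verifies hypotheses and cites \cite[Theorem 2.8.1]{topkis1998supermodularity} as a black box. Steps one and three of your argument are fine; the genuine gap is step two. The inequality
\[
f_i(y_i,x_{-i})+f_i(y_i',x'_{-i})\le f_i(y_i\wedge y_i',x_{-i})+f_i(y_i\vee y_i',x'_{-i})
\]
is the supermodularity of $f_i$ \emph{as a function on the sublattice} $S$, and it does not follow from the two hypotheses of Definition \ref{dfsupgame}. Every standard derivation chains a sectional supermodularity inequality with an increasing-differences inequality, and each such chain passes through an intermediate evaluation such as $f_i(y_i',x_{-i})$ or $f_i(y_i\vee y_i',x_{-i})$; none of these points is guaranteed to lie in $S$, so neither hypothesis applies. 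The ``characterization of supermodular functions on sublattices of a product'' you invoke is a theorem about \emph{product} domains, where every rectangle is filled in; the implication from (sectional supermodularity $+$ increasing differences) to joint supermodularity genuinely fails on sublattices. Knowing that the meet and the join of your two points lie in $S$ makes the inequality well posed, not true.

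The gap cannot be repaired from the stated hypotheses. Take $N=\{1,2\}$, $S_1=\{m,a,b,M\}$ the diamond lattice ($a,b$ incomparable, $a\wedge b=m$, $a\vee b=M$), $S_2=\{0,1\}$, and $S=\{(m,0),(a,0),(b,1),(M,1)\}$, a sublattice with surjective projections. Put $f_1(a,0)=f_1(b,1)=1$, $f_1(m,0)=f_1(M,1)=0$, and $f_2\equiv 0$. Each section $S_1(x_{-1})$ is a chain and $S$ contains no nontrivial rectangle, so the sectional supermodularity and increasing-differences conditions of Definition \ref{dfsupgame} hold (the latter vacuously), and Assumption \ref{as} holds with discrete topologies. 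Yet your step-two inequality fails at the pair $(a,0),(b,1)$, since $1+1\not\le 0+0$; correspondingly $Y_1(0)=\{a\}$, $Y_1(1)=\{b\}$, $(m,0)\le(b,1)$ in $S$, but $a\wedge b=m\notin Y_1(0)$, so the correspondence is not increasing. Note that the same example defeats the paper's own proof, which asserts supermodularity of $f_i(\cdot,t_{-i})$ on all of $S_i$ and invokes a monotonicity theorem whose hypotheses require the objective to be supermodular and have increasing differences on a full product. So your difficulty is not an artifact of your route: the lemma needs a stronger hypothesis (for instance, $f_i$ supermodular on the sublattice $S$ itself, or $S=\prod_i S_i$) before either argument closes.
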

\begin{proof}
	For every $t\in S$, the function $f_i(\cdot,t_{-i}):S_i\to \R$ is supermodular. Let $B=\{(x_i,t)\in S_i\times S|(x_i,t_{-i})\in S\}$. The function \[B\to \R,\quad (x,t)\mapsto f_i(x_i,t_{-i})\] has increasing difference relative to $B\subset S_i\times S$. We conclude by Lemma \ref{exYi} and Topkis' monotonicity theorem \cite[Theorem 2.8.1]{topkis1998supermodularity}.
\end{proof}
\begin{lm}\label{Y}
	For every finite nonempty subset $I$ of $N$:
	\renewcommand\labelenumi{(\theenumi)}
	\begin{enumerate}
		\item\label{it:YI} The set $Y_I(x)$ is a nonempty closed \emph{compact} subset of $\prod_iS_i$ for every $x\in S$.
		\item\label{it:YIinc} The  correspondence $Y_I:S\twoheadrightarrow S$ is increasing. In particular, for every $x\in S$, $Y_I(x)$ is a sublattice of $S$ and of $\prod_iS_i$.
		\item For every $x\in S$, the sublattice $Y_I(x)$ is \emph{subcomplete} in $S$. In particular, both $\max Y_I(x)$ and $\min Y_I(x)$ exist.
	\end{enumerate}
\end{lm}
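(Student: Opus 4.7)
The plan is to prove the three parts sequentially, each building on the previous. For part (\ref{it:YI}), I would fix $x\in S$, show that $g_I(\cdot,x):S(x)\to\R$ is upper semicontinuous on the compact set $S(x)$, and conclude. The compactness of $S(x)$ in $\prod_iS_i$ is Lemma \ref{exSi(x_{-i})}. For each $i\in I$, the map $y\mapsto f_i(y_i,x_{-i})$ on $S(x)$ is the composition of the continuous projection $\prod_iS_i\to S_i$ (which sends $S(x)$ into $S_i(x_{-i})$) with the upper semicontinuous function $f_i(\cdot,x_{-i})$, hence is itself u.s.c. Because $I$ is finite, the sum $g_I(\cdot,x)$ is u.s.c.\ on a compact space, so its argmax $Y_I(x)$ is nonempty, and equals the closed set $\{y\in S(x):g_I(y,x)\ge \max g_I(\cdot,x)\}$, hence is compact and closed in $\prod_iS_i$.

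For part (\ref{it:YIinc}), I would mimic the proof of Lemma \ref{exYiinc} by applying Topkis' monotonicity theorem \cite[Theorem 2.8.1]{topkis1998supermodularity} to the function $g_I$ on $\Delta=\{(y,x)\in S\times S:y\in S(x)\}$. Three ingredients need to be verified: (a) for each $x\in S$, the function $g_I(\cdot,x):S(x)\to\R$ is supermodular, which follows termwise from the supermodularity of each $f_i(\cdot,x_{-i})$; (b) $g_I$ has increasing differences relative to $\Delta\subset S\times S$, which follows by summing over $i\in I$ the increasing-difference inequality for each $f_i$ relative to $S\subset S_i\times\prod_{j\neq i}S_j$ (the four relevant points $(y_i,x_{-i})$, $(y'_i,x_{-i})$, $(y_i,x'_{-i})$, $(y'_i,x'_{-i})$ all belong to $S$ by the definition of $\Delta$); (c) the correspondence $x\mapsto S(x)$ is increasing, which is Lemma \ref{ex2.4.5} (\ref{it:S(x)inc}). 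The ``in particular'' clause then follows by specializing the definition of increasing correspondence to $x'=x$, and since $S$ is itself a sublattice of $\prod_iS_i$, being a sublattice of $S$ and being a sublattice of $\prod_iS_i$ coincide for subsets of $S$.

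For part (3), the strategy is to deduce subcompleteness via Theorem \ref{thm:Topkis2}. By part (\ref{it:YIinc}), $Y_I(x)$ is a sublattice of $\prod_iS_i$, and by part (\ref{it:YI}) it is compact in the product of the topologies $\tau_i$. Since each $\tau_i$ is finer than the interval topology of $S_i$, the set $Y_I(x)$ remains compact in the (coarser) product of interval topologies, so Theorem \ref{thm:Topkis2} yields subcompleteness of $Y_I(x)$ in $\prod_iS_i$. Because $S$ is itself subcomplete in $\prod_iS_i$ (Lemma \ref{exScplt}), the suprema and infima of subsets of $Y_I(x)$ computed in $\prod_iS_i$ coincide with those computed in $S$, so $Y_I(x)$ is subcomplete in $S$ as well. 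In particular, taking $A=Y_I(x)$ produces $\max Y_I(x)$ and $\min Y_I(x)$.

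I do not expect a serious obstacle; the step requiring the most care is the increasing-differences verification in part (\ref{it:YIinc}), where one has to check that the feasibility conditions needed to invoke the increasing-differences hypothesis on each individual $f_i$ are automatic from membership in $\Delta$, and that indifferences at $y_i=y'_i$ or $x_{-i}=x'_{-i}$ cause no issue in the summation.
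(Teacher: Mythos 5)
Your proposal is correct and follows essentially the same route as the paper: upper semicontinuity of the finite sum $g_I(\cdot,x)$ on the compact set $S(x)$ for (1), Topkis' monotonicity theorem applied to $g_I$ on $\Delta$ for (2), and Theorem \ref{thm:Topkis2} for (3). The extra details you supply (the projection argument for u.s.c., passing from $\tau$-compactness to interval-topology compactness, and the transfer of subcompleteness from $\prod_i S_i$ to $S$) are correct elaborations of steps the paper leaves implicit.
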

\begin{proof}
	\hfill
	\renewcommand\labelenumi{(\theenumi)}
	\begin{enumerate}
		\item  By assumption, for every $i\in N$, the function $f_i(\cdot,x_{-i}):S(x)\to \R$ is upper semicontinuous. As the sum of finitely many upper semicontinuous functions, the function $g_I(\cdot,x):S(x)\to \R$ is upper semicontinuous. By Lemma \ref{exSi(x_{-i})}, $S(x)$ is nonempty compact closed subset of $\prod_iS_i$. Then $Y_I(x)$ is a nonempty closed compact subset of $S(x)$.
		\item 	The correspondence \[S\twoheadrightarrow S,\quad x\mapsto S(x)\] is increasing by Lemma \ref{ex2.4.5} (\ref{it:S(x)inc}). For every $i\in N$, every $x\in S$, the function \[S(x)\to \R,\quad y\mapsto f_i(y_i,x_{-i})\] is supermodular by Definition \ref{dfsupgame}. As the sum of finitely many supermodular functions, the function $g_I(\cdot,x):S(x)\to \R$ is also supermodular.  By Definition \ref{dfsupgame}, the function \[\Delta\to \R,\quad (y,x)\mapsto f_i(y_i,x_{-i})\] has increasing difference relative to $\Delta\subset S_i\times (\prod_{j\neq i}S_j)$. As the sum of finitely many such functions, $g_I:\Delta\to \R$ also has increasing difference relative to $\Delta\subset S_i\times (\prod_{j\neq i}S_j)$. By Point (\ref{it:YI}) and \cite[Theorem 2.8.1, p.76]{topkis1998supermodularity}, the correspondence $Y_I$ is increasing. 
		\item 
		By Point (\ref{it:YI}) and Theorem \ref{thm:Topkis2}, $Y(x)$ is a subcomplete sublattice of $\prod_iS_i$, hence also a subcomplete sublattice of $S$. 
\end{enumerate}	\end{proof}

Now, we prove the main theorem.
\begin{proof}[Proof of Theorem \ref{thm:main}] 
	We apply  Fact \ref{ft:TarskiZhou}. For this, we check the conditions in order. By Lemma \ref{exScplt},  the lattice $S$ is complete. By Lemma \ref{Y},   the correspondence $Y_I$ is increasing and for every $x\in S$, the set $Y_I(x)$ is a nonempty subcomplete sublattice of $S$.  Now by Fact \ref{ft:TarskiZhou}, $\mathrm{Fix}(Y_I)$ is a nonempty complete lattice. 
	The second statement follows by  Lemma \ref{lm:FixY}.
\end{proof}

\begin{rk} In Theorem \ref{thm:main}, when each $(S_i,\tau_i)$ is a sublattice of some Euclidean space,   we recover Fact \ref{ft:Toplisgame}. In another direction, if we take $S$ to be the full $\prod_iS_i$, then we obtain  Fact \ref{ft:Zhougame}. \end{rk}
A noncooperative game is called \emph{finite} if the player set and the strategy set of every player  are finite.
\begin{cor}\label{cor:fini}
	For every finite supermodular game, the set of Nash equilibria is a nonempty  lattice.
\end{cor}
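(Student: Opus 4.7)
The plan is to deduce this immediately from Theorem \ref{thm:main} by endowing each strategy space with the discrete topology. Concretely, for every $i \in N$, set $\tau_i$ to be the discrete topology on the finite set $S_i$. Since the discrete topology is the finest topology on any set, it is in particular finer than the interval topology of the lattice $S_i$, so the first bullet of Assumption \ref{as} holds automatically.

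Next I would verify the remaining two bullets. Because $N$ is finite and each $S_i$ is finite, the product $\prod_{i \in N} S_i$ is a finite set, and the product of the discrete topologies $\tau_i$ is again the discrete topology. Consequently $S \subset \prod_i S_i$, being a finite subset of a discrete space, is closed and compact, yielding the second bullet. For the third bullet, for any $i\in N$ and any $x \in S$, the set $S_i(x_{-i})$ is finite and discrete, so every real-valued function on it, and in particular $f_i(\cdot, x_{-i})$, is continuous and hence upper semicontinuous.

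With Assumption \ref{as} verified and $N$ finite, Theorem \ref{thm:main} applies and tells us that $E$ is a nonempty complete lattice; in particular $E$ is a nonempty lattice, which is exactly what the corollary claims.

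There is really no serious obstacle here: the point of the corollary is to record that once one has Theorem \ref{thm:main} in hand, the finite case follows at once by choosing the canonical discrete topology, under which all continuity/compactness/semicontinuity requirements become vacuous. The only thing worth pointing out is the choice of the discrete topology, which is what makes the three hypotheses of Assumption \ref{as} simultaneously trivial.
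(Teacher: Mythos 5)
Your proposal is correct and follows exactly the same route as the paper: equip each $S_i$ with the discrete topology, observe that all three bullets of Assumption \ref{as} become trivial by finiteness, and invoke Theorem \ref{thm:main}. The only difference is that you spell out the verification of Assumption \ref{as} in more detail than the paper does.
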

\begin{proof}
	For every $i\in N$, let $\tau_i$  be the discrete topology of $S_i$. Then Assumption \ref{as} is satisfied. We conclude by Theorem \ref{thm:main}. 	\end{proof}
\begin{rk}Although Corollary \ref{cor:fini} seems fundamental,	as far as we know  it is not explicitly stated in the literature. A special case (with every $S_i$ is a chain  and $S=\prod_iS_i$, as remarked in \cite[footnote on p.2]{takahashi2002pure}) is in \cite[Theorem 3]{friedman2001learning}.\end{rk}

Theorems \ref{thm:inf} and \ref{thm:inf2} are two results in the case of infinitely many players, but with some extra restrictions. 

\begin{thm}\label{thm:inf}
	If  moreover $S=\prod_i S_i$, then  $E$  is a nonempty  complete lattice.
\end{thm}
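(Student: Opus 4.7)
The plan is to apply the Tarski--Zhou fixed point theorem (Fact \ref{ft:TarskiZhou}) directly to the classical best response correspondence $R:S\twoheadrightarrow S$, rather than to the finite-partial-response $Y_I$ used in the proof of Theorem \ref{thm:main}. The hypothesis $S=\prod_iS_i$ gives the product form $R(x)=\prod_{i\in N}Y_i(x_{-i})$ for every $x\in S$, and it is exactly this product structure that lets us drop the finiteness of $N$. Once $R$ is shown to satisfy the hypotheses of Fact \ref{ft:TarskiZhou}, the conclusion $E=\Fix(R)$ from Fact \ref{equi} finishes the proof.

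The first routine step is to note that $S=\prod_iS_i$ is a complete lattice: each factor $S_i$ is a complete lattice by Lemma \ref{exSi(x_{-i})}, and an arbitrary product of complete lattices is a complete lattice under the coordinatewise order. The second routine step is to check that $R$ is increasing: for $x\le x'$ in $S$, $y\in R(x)$, $y'\in R(x')$, the coordinatewise relations $y_i\in Y_i(x_{-i})$ and $y'_i\in Y_i(x'_{-i})$ together with the increasing property of each $Y_i$ (Lemma \ref{exYiinc}) yield $y_i\wedge y'_i\in Y_i(x_{-i})$ and $y_i\vee y'_i\in Y_i(x'_{-i})$, whence $y\wedge y'\in R(x)$ and $y\vee y'\in R(x')$.

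The substantive step is to show that for every $x\in S$ the set $R(x)$ is a nonempty subcomplete sublattice of $S$. Each factor $Y_i(x_{-i})$ is a nonempty $\tau_i$-compact sublattice of $S_i$ by Lemmas \ref{exYi} and \ref{exYiinc}; since $\tau_i$ is finer than the interval topology of $S_i$, $Y_i(x_{-i})$ is also compact in the interval topology of $S_i$, and Theorem \ref{thm:Topkis} then gives that it is a subcomplete sublattice of $S_i$. By Tychonoff, $R(x)=\prod_iY_i(x_{-i})$ is compact in the product of the individual interval topologies on $\prod_iS_i$ and is clearly a sublattice of $\prod_iS_i$; Theorem \ref{thm:Topkis2} then yields that it is a subcomplete sublattice of $S=\prod_iS_i$. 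Fact \ref{ft:TarskiZhou} now applies and delivers that $\Fix(R)$ is a nonempty complete lattice, and Fact \ref{equi} identifies $\Fix(R)$ with $E$.

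The main obstacle is the subcompleteness of $R(x)$, and this is precisely where the hypothesis $S=\prod_iS_i$ is indispensable: for a general sublattice $S\subsetneq\prod_iS_i$, the product $\prod_iY_i(x_{-i})$ need not even be contained in $S$, which is why Theorem \ref{thm:main} had to route through the finite-$I$ correspondences $Y_I$ and did not extend to infinite $N$. Under $S=\prod_iS_i$ this obstruction vanishes, and the invocation of Tychonoff together with Theorem \ref{thm:Topkis2} becomes legitimate even for arbitrarily large $N$.
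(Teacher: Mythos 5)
Your proposal is correct and follows essentially the same route as the paper: apply the Tarski--Zhou theorem (Fact \ref{ft:TarskiZhou}) to the classical best response $R(x)=\prod_iY_i(x_{-i})$, using Lemmas \ref{exYi} and \ref{exYiinc} to get nonemptiness, compactness and the increasing property, and the Topkis compactness criterion to upgrade each $R(x)$ to a subcomplete sublattice before invoking Fact \ref{equi}. The only (immaterial) difference is that you obtain compactness of $R(x)$ factorwise via Tychonoff and Theorem \ref{thm:Topkis2}, whereas the paper notes $R(x)$ is closed in the compact set $S$ and applies Theorem \ref{thm:Topkis} directly.
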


\begin{proof}By Lemma \ref{exYi}, for every $i\in N$, every $x\in S$, the set $Y_i(x_{-i})$ is nonempty and closed in $S_i$. Then  $R(x)=\prod_iY_i(x_{-i})$ is nonempty and closed in $S=\prod_iS_i$. As $S$ is compact, $R(x)$ is compact in the product topology, and hence compact in the interval topology of $S$. By Lemma \ref{exScplt}, the lattice $S$ is complete. On account of Lemma \ref{exYiinc}, for every $i\in N$, the correspondence $Y_i:S\twoheadrightarrow S_i$ is increasing. Therefore the correspondence $R$ is increasing. In particular, for every $x\in S$, $R(x)$ is a sublattice of $S$. By Theorem \ref{thm:Topkis}, $R(x)$ is a subcomplete sublattice of $S$. The proof is  completed  by  Facts \ref{ft:TarskiZhou} and \ref{equi}.
\end{proof}
If the game is not in the product form, we have only an existence result. 

\begin{thm}\label{thm:inf2}
	Assume further that  for every $i\in N$, the function $f_i:S\to \R$ is continuous and the correspondence \[\phi_i:S_{-i}\twoheadrightarrow S_i,\quad x_{-i}\mapsto S_i(x_{-i})\] is lower semicontinuous. Then $E$ is nonempty.
\end{thm}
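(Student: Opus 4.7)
The plan is to realize $E$ as the intersection of closed subsets $\{F_i\}_{i\in N}$ of the compact space $S$ and then invoke the finite intersection property. By (\ref{eq:E=FN}) we have $E=\cap_{i\in N}F_i$, and Assumption \ref{as} guarantees that $S$ is compact, so the proof reduces to two sub-tasks: (a) each $F_i$ is closed in $S$, and (b) every finite subfamily $\{F_i\}_{i\in I}$ has nonempty intersection. Task (b) is already at hand: Lemma \ref{lm:FixY} combined with Theorem \ref{thm:main} yields $\cap_{i\in I}F_i=\mathrm{Fix}(Y_I)\neq\emptyset$ for every finite nonempty $I\subset N$. The new hypotheses---continuity of each $f_i$ and lower semicontinuity of each $\phi_i$---must therefore be used precisely to establish (a).

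For (a), I would use a Berge-type argument. Introduce the value function $M_i(x_{-i}):=\sup_{y_i\in S_i(x_{-i})}f_i(y_i,x_{-i})$, which is finite because $f_i$ is continuous on the compact set $S$. The definition (\ref{eq:Fi}) rewrites as $F_i=\{x\in S:f_i(x)\ge M_i(x_{-i})\}$, so it suffices to prove that the composite $x\mapsto M_i(x_{-i})$ is lower semicontinuous on $S$: then $x\mapsto f_i(x)-M_i(x_{-i})$ is upper semicontinuous on $S$ (continuous minus lower semicontinuous), and its non-negativity set $F_i$ is closed.

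To verify the lower semicontinuity of $M_i$, I would show that $\{x_{-i}\in S_{-i}:M_i(x_{-i})>c\}$ is open for every $c\in\R$. If $M_i(x^{*}_{-i})>c$, pick $y^{*}_i\in\phi_i(x^{*}_{-i})$ with $f_i(y^{*}_i,x^{*}_{-i})>c$; continuity of $f_i$ supplies a product neighborhood $V\times U$ of $(y^{*}_i,x^{*}_{-i})$ in $S$ on which $f_i>c$, and lower semicontinuity of $\phi_i$ shrinks $U$ to some $U'\subset U$ such that $\phi_i(x_{-i})\cap V\neq\emptyset$ for every $x_{-i}\in U'$; hence $M_i>c$ on $U'$. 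Combining (a) and (b) via the finite intersection property on the compact space $S$ finishes the proof. The most delicate point is the lower semicontinuity of $M_i$, which is the only place where the two new hypotheses enter in an essential way; everything else is a clean assembly of earlier results.
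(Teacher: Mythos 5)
Your proposal is correct and follows essentially the same route as the paper: both decompose $E=\cap_{i\in N}F_i$, prove each $F_i$ closed using the continuity of $f_i$ together with the lower semicontinuity of $\phi_i$, obtain nonemptiness of finite intersections from Lemma \ref{lm:FixY} and Theorem \ref{thm:main}, and conclude by the finite intersection property on the compact set $S$. The only difference is presentational: you package the closedness of $F_i$ as lower semicontinuity of the value function $M_i$ (a Berge-style argument), whereas the paper exhibits an open neighborhood of each point of $S\setminus F_i$ directly, but the underlying topological content is identical.
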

\begin{proof}
	For every $i\in N$, we show that the $F_i$ defined in (\ref{eq:Fi}) is closed in $S$.
	
	In fact, take any $s\in S\setminus F_i$, there is $y_i\in S_i(s_{-i})$ such that $f_i(s)<f_i(y_i,s_{-i})$. Take $a\in (f_i(s),f_i(y_i,s_{-i}))\subset \R$. By the continuity of $f_i:S\to \R$, the set $W:=\{x\in S:f_i(x)>a\}$ is an open neighborhood of $(y_i,s_{-i})\in S$. By the definition of product topology on $S_i\times S_{-i}$, there exist open subsets $U\subset S_i$ and $V\subset S_{-i}$ with $(y_i,s_{-i})\in (U\times V)\cap S\subset W$. Then $y_i\in U$ and $s_{-i}\in V$.

	Set \[O:=\{x_{-i}\in S_{-i}:S_i(x_{-i})\cap U\neq\emptyset\}.\] By the lower semicontinuity of $\phi_i$, $O$ is open in $S_{-i}$. As $y_i\in U\cap S_i(s_{-i})$, one has $s_{-i}\in O$. By the continuity of $f_i$ on $S$ again, the subset \[A=(S_i\times (V\cap O))\cap \{x\in S:f_i(x)<a\}\] is open in $S$ and contains $s$. For every $x\in A$, one has  $f_i(x)<a$ and $x_{-i}\in O\cap V$, so there is $y'_i\in S_i(x_{-i})\cap U$. Then \[(y'_i,x_{-i})\in (U\times V)\cap S\subset W,\] so $f_i(y'_i,x_{-i})>a>f_i(x)$. Thus, $x\notin F_i$ for all $x\in A$, or equivalently $A\subset S\setminus F_i$. This shows the closedness of $F_i$.
	
	For every finite nonempty subset $I$ of $N$, by Lemma \ref{lm:FixY}, $\cap_{i\in I}F_i=\mathrm{Fix}(Y_I)$. It is nonempty by  \cite[Lemma 2.2.3, p.17]{topkis1998supermodularity} and Theorem \ref{thm:main}. By the compactness of $S$, the full intersection $E=\cap_{i\in N}F_i$ is nonempty.
\end{proof}
\begin{rk}A closely related result is \cite[Theorem 4.3.2, p.189]{topkis1998supermodularity}, which concerns the algorithmic aspect with $N$ being finite, while we study the existence of equilibria with $N$ being  infinite. 
\end{rk}
	\subsection*{Acknowledgments}
	The author would like to thank her advisor, Philippe Bich, for his helpful feedback and encouraging her to work on this project.
	\bibliography{ref.bib}
	\bibliographystyle{alpha}\end{document}